\begin{document}
	
	\title{Throughput and Delay Scaling of Content-Centric\\ Ad Hoc and Heterogeneous Wireless Networks}
	
	\author{Milad~Mahdian,~\IEEEmembership{Member,~IEEE,}
		Edmund~Yeh\thanks{This work was supported by National
			Science Foundation grant CNS-1423250
			and a Cisco Systems research grant.},~\IEEEmembership{Senior Member,~IEEE}
	}
	
	%
	%
	
	\maketitle

	\newtheorem{theorem}{Theorem}
	\newtheorem{lemma}{Lemma}
	\newtheorem{defi}{Definition}

	\begin{abstract}
We study the throughput and delay characteristics of wireless caching networks, where users are mainly interested in retrieving content stored in the network, rather than in maintaining source-destination communication. Nodes are assumed to be uniformly distributed in the network area. Each node has a limited-capacity content store, which it uses to cache contents.  We propose an achievable caching and transmission scheme whereby requesters retrieve content from the caching point which is closest in Euclidean distance.  We establish the throughput and delay scaling of the achievable scheme, and show that the throughput and delay performance are order-optimal within a class of schemes.  We then solve the caching optimization problem, and evaluate the network performance for a Zipf content popularity distribution, letting the number of content types and the network size both go to infinity.  Finally, we extend our analysis to heterogeneous wireless networks where, in addition to wireless nodes, there are a number of base stations uniformly distributed at random in the network area. We show that in order to achieve a better performance in a heterogeneous network in the order sense, the number of base stations needs to be greater than the ratio of the number of nodes to the number of content types. Furthermore, we show that the heterogeneous network does not yield performance advantages in the order sense if the Zipf content popularity distribution exponent exceeds 3/2.
	\end{abstract}
	\section{Introduction and Related Work}
	\IEEEPARstart{T}{wo} fundamental trends in networking are: first, the bulk of network traffic today, and of
	its projected enormous growth, consists mainly of content disseminated to multiple users. Second, network
	content is accessed increasingly in wireless environments.  A basic problem, of both theoretical and practical interest,
	is the characterization of performance and scaling in large-scale wireless networks for content distribution.  This paper
	addresses this key question.  We focus on the well-known random wireless network model, where nodes are uniformly distributed in a 
	network area.  Rather than assuming a wireless communication network consisting of source-destination pairs, however, we investigate a wireless caching network infrastructure where users are mainly interested in retrieving content stored in the network.  Combining caching schemes  with the proposed request forwarding, we derive the throughput and delay scalings of the content-centric wireless network and solve the caching optimization problem.  We then extend our analysis to heterogeneous wireless networks with base stations as well as wireless nodes. 
	
	As the number of users of wireless technology continues to grow exponentially, the scaling behavior of wireless networks has been of wide interest.  Gupta and Kumar \cite{gupta} pioneered this study within the context of wireless communication networks consisting of source-destination pairs.  They focus on a random network model where $n$ nodes are distributed independently and uniformly on a unit disk. 
	Each node has a randomly chosen destination node and can transmit at $W$ bits per second provided that the interference is sufficiently small.  Each node can simultaneously serve as a source, a destination, and as a relay for other source-destination pairs.  It was shown \cite{gupta} that the per-source-destination-pair throughput scales as $\Theta(1/\sqrt{n\log n})$,\footnote{We use the
		following notation. We say $f(n) = O(g(n))$ if there exists $n_0 > 0$ and a constant $M$ such that
		$|f(n)| \leq M|g(n)|~\forall n \geq n_0$. We say $f(n) = o(g(n))$ if for any constant $\epsilon>0$
		there exists $n(\epsilon)> 0$ such that $|f(n)| \leq \epsilon|g(n)|~\forall n \geq n(\epsilon)$.
		We say $f(n) = \Omega(g(n))$ if $g(n) = O(f(n))$, and $f(n)=\omega(g(n))$ if $g(n)=o(f(n))$.
		Finally, we say $f(n) = \Theta(g(n))$ if $f(n) = O(g(n))$ and $f(n) = \Omega(g(n))$.} where $n$ is the number of wireless nodes in the network.
	Subsequent work was devoted to characterizing the tradeoff between throughput and delay~\cite{lin,neely,mobility, elgamal, elgamalC, anycast, towsley,kulkarni}.  
	In particular, El Gamal et al. \cite{elgamal, elgamalC} study both static and mobile wireless networks, and  {show that the optimal per-node throughput and network delay for the static wireless network scenario are  
	$\lambda(n) = \Theta( 1/(n\sqrt{a(n)}) )$ and $D(n) = \Theta( 1/\sqrt{a(n)} )$, respectively,  
	 where $n$ is the number of wireless nodes in the network, and $a(n)$ is the appropriately chosen cell size such that $a(n) = \Omega (\log n/n)$.}
	
	In \cite{towsley}, Liu et al., extend the ad hoc network model to a hybrid model in which a sparse number of base stations are placed in the wireless network. They show that for a hybrid network
	of $n$ nodes and $m$ base stations, if $m = o(\sqrt{n})$, the benefit of including additional base stations
	on capacity is insignificant in the order sense. However, for $m = \Omega(\sqrt{n})$, 
	the throughput capacity increases linearly with the number of
	base stations, improving the scaling of the network's performance over the pure
	ad hoc case.

	 As shown in these papers, the throughput of wireless networks scales poorly with number of users. In general, for a static wireless network, the maximum common rate sustainable for all flows in the network scales inversely with the number of hops. In \cite{mobility}, the authors show that mobility can improve the throughput of wireless networks. In particular, they show that direct communication between sources and destinations
	 alone cannot achieve high throughput. They propose a two-hop scheme in which the per-node throughput is $\Theta(1)$.  This result, however, comes with the price of large delays.  Specifically, the delay associated with their scheme is later shown to be $\Theta(n\log n)$. 
	  In \cite{zhening}, network coding is used to improve the delay of mobile wireless networks. By employing Reed-Solomon
	  codes, the authors improve the delay of the two-hop scheme in \cite{mobility} from $\Theta(n\log n)$ to $\Theta(n)$.

	In wireless networks running popular applications such as on-demand video and web browsing, caching content objects closer to requesters can significantly decrease the number of required hops, and has the potential to substantially improve throughput and delay scalings. 
	Recently, new content-centric networking architectures such as Named Data Networking (NDN) \cite{ndn} and Content-Centric Networking (CCN) \cite{Jacobson} have been developed to more directly enable efficient content distribution using caching.  
	
	Given the above, a natural and important problem is the characterization of performance and scaling in large-scale wireless caching networks.  The problem has received attention recently in \cite{asymptotic,italian}.  In \cite{asymptotic}, asymptotic properties of the joint delivery and replication problem in a static grid-based wireless network with multi-hop communication and caching are presented.  The objective here is the minimization of average link capacity subject to content replication constraints.  Scaling laws for link capacities are derived, with the content popularity following a Zipf distribution. 
	
	The paper~\cite{italian} derives the throughput and delay performance of content-centric mobile ad-hoc networks under various mobility models on a random geometric graph, for Zipf content popularity distributions.  The paper makes the assumption that at any given time, each node has at most one pending content request in the network.  It further considers a request model in which the relation between the throughput and delay is pre-determined as $\lambda=\frac{1}{\bar{I}+\bar{D}}$, where $\lambda$ is the 
	average request throughput, $\bar{D}$ is the average request delay, and $\bar{I}$ is the average time between consecutive content requests~\cite{italian}.  
	

	In~\cite{cedric}, the asymptotic throughput capacity of content-centric wireless networks is studied under the assumption that a constant number of content objects with similar popularity are requested and cached with limited lifetime by network users.  By computing the average lifetime of the cached content objects of each user, the network throughput is derived for both the grid and random network models.
	
	In \cite{femto}, a content placement problem in a wireless femto-cellular network using \emph{helper} nodes is studied.  The paper considers a one-hop communication scheme where nodes are connected to a set of helper nodes according to a bipartite graph.  Each node is also connected to the base station.  The paper focuses on the minimization of the average total downloading delay for a given content popularity distribution and network topology.  The authors show that the uncoded optimal file assignment is NP-hard, and demonstrate a greedy strategy with performance which is provably within a factor 2 of the optimum.  
	
	The authors of \cite{D2D-usc} analyze base-station-assisted device-to-device wireless networks with caching capability.  They examine a cellular grid network model in which communication among wireless nodes or between wireless nodes and the base station is limited to one hop, and derive the asymptotic throughput-outage tradeoff for the network model.

	Finally, the paper \cite{vip} develops a systematic framework to solve the fundamental problem of jointly optimizing interest request forwarding and dynamic cache placement and eviction, for arbitrary network topologies and content popularity distributions. 

In this paper, we characterize the throughput and delay scaling behavior of wireless caching networks, using the random geometric model as studied in \cite{gupta},  \cite{elgamal} and in many related papers (previously within the context of traditional source-destination communication networks).  We assume that contents follow a general popularity distribution, and that each node has a limited-capacity content store, which it uses to cache contents according to a proposed caching scheme.   Users employ multi-hop communication to retrieve the requested content from content stores caching the requested object.  

We propose an achievable caching and transmission scheme whereby holders of each content item are independently and uniformly distributed in the network area, and transmission proceeds according to a multi-hop, TDM, cellular scheme in which requesters retrieve content from the holder which is closest in Euclidean distance.  We establish the throughput and delay scaling of the achievable caching/transmission scheme, and show that the throughput and delay performance are order-optimal within a class of schemes. 

The per-node throughput $\lambda(n)$ and network delay $D(n)$ of the proposed achievable scheme is shown to satisfy\footnote{We say an event holds with high probability (w.h.p.) if the event occurs with probability 1 as $n$ goes to infinity.}
	\begin{equation}
	D(n)\lambda(n)= \Theta((na(n))^{-1} )\qquad\; w.h.p.  \label{eq:ccntradeoff}
	\end{equation}
It can be seen from \eqref{eq:ccntradeoff} that one can simultaneously increase the throughput while decreasing delay, for a given $n$ and $a(n)$.  This is accomplished by intelligently designing the caching and transmission scheme to decrease the number of transmissions and the accompanying interference.  

Next, we optimize the caching strategy to simultaneously minimize the average network delay and maximize the network throughput.  Using the optimal caching strategy, we evaluate the network performance under a Zipf content popularity distribution. 
	
Finally, we investigate heterogeneous wireless networks where, in addition to wireless nodes, there are a number of base stations uniformly distributed at random in the network area.  We show the proposed model and optimization approach can be naturally extended to the heterogeneous case.  The solution of the content placement optimization problem shows that the number of base stations needs to be greater than the ratio of the number of nodes to the number of content types in order to achieve a better performance in a heterogeneous network in the order sense. For the case where the number of content objects is greater than the number of wireless nodes, this condition reduces to having at least one base station in the network. In addition, we show that for the Zipf content popularity distribution with exponent $\alpha\geq 3/2$, the performance of the wireless ad hoc network is of the same order as for the heterogeneous wireless network, independent of number of base stations.
	
In contrast to related work, this paper offers the following unique contributions.   First, our paper uses the well-known random dense geometric network model, which was used in many previous papers on throughput and delay scaling in traditional source-destination wireless communication networks (e.g. \cite{gupta} and \cite{elgamal}).  This allows for a more direct performance comparison between wireless communication networks and content-centric wireless networks.  Specifically, this paper clearly shows that caching in wireless content-centric networks allows us to increase the throughput and decrease delay simultaneously.  Second, in contrast to related work, our paper demonstrates an achievable caching and transmission scheme and at the same time shows that the throughput and delay performance of the achievable scheme is optimal within a class of schemes.  Third, our paper is the first to characterize the throughput and delay scaling in heterogeneous wireless content-centric networks.
	\color{black}
	\section{Network Model}
	\label{model}
	
	We analyze a content-centric wireless network model where $n$ nodes are independently and uniformly distributed over a unit-sized torus.  From these nodes originate requests for content objects.  There are $M$ distinct content objects, where $M$ scales as $n^\beta$, $0<\beta<1$. Note that we assume $\beta<1$ in order for the network to have sufficient memory to store at least one copy of each content object.
	All content objects are assumed to have the same size.  Each node is assumed to have a local cache, named the \emph{Content Store}, which can store copies of content objects.  All Content Stores are assumed to have the same size: $K$ content units. 
	
	Time is slotted: $t = 0,1,2,\ldots$.  Assuming an infinite backlog of requests at each node, all nodes generate requests for content objects at each time $t$.  Each content request is for content object $m, 1 \leq m \leq M$, with probability $p_m$, independent of all other requests.  Content requests are admitted into the network at the rate of the achievable throughput for a feasible scheme.

{Since the content popularity distribution is assumed to be time-invariant, we implement a static caching allocation in the initial phase of the network operation.   Let $\chi_m$ be the set of nodes which cache content object $m$ in their Content Store, where $X_m=|\chi_m|$.
We call the nodes in $\chi_m$ the \emph{holders} of content $m$.  The holders are specifically chosen as follows.
For each content $m$, choose one of the  $\binom{n}{X_m}$ sets of $X_m$ nodes, uniformly at random and independent of the set choices for all other contents, and designate the nodes in the chosen set as the holders of content $m$.  This ensures that for each $m$, there are exactly $X_m$ holders distributed uniformly and independently in the network. In addition, the sets of holders are chosen independently across different contents.}
	
In order for a caching allocation $\{X_m\}_{m=1}^M$ to be feasible, the constraint on total caching space must be satisfied:
	\begin{equation}
	\sum_{m=1}^M X_m \leq nK.
	\label{eq:totalcache}
	\end{equation}
	The total caching constraint in~\eqref{eq:totalcache} is a relaxed version of the individual caching constraints.  For ease of presentation and analysis, we use~\eqref{eq:totalcache} for the throughput-delay analysis and optimization problem. 	
	
	For concreteness, we consider the content delivery mechanism embodied in the NDN architecture \cite{ndn}.  Specifically, requests for content objects are submitted using Interest Packets, which are forwarded toward Content Stores caching the requested content object using multi-hop communication.\footnote{Assume that routing (topology discovery and data reachability) has already been accomplished in the network, so that each node knows to which other nodes it can forward an Interest Packet to reach a Content Store caching the requested object.  Equivalently, in an NDN network, the Forward Information Base (FIB) has already been populated at each node for each content object.}
	When the Interest Packet reaches a node caching the requested content object, a Data Packet containing the requested content object is transmitted in the reverse direction along the path taken by the corresponding Interest Packet, back to the requesting node.\footnote{Note that Interest Packets are usually much smaller in size than the corresponding Data Packet.  If a node requests a content object which is cached in its local Content Store, the request can be satisfied immediately and there is no need to generate an Interest Packet.  Since the Content Store has limited cache space, this is not usually the case.  For ease of analysis, we assume in this paper that if the requested content is in the local cache, the node still generates an Interest Packet for it, transmits it to the nearest holder excluding itself, and uses the network to retrieve the content object.} 

	Transmissions in wireless networks are subject to multi-user interference.  Our model for a successful wireless transmission in this environment follows the {\em Protocol Model} given in~\cite{elgamal}. 
	Suppose node $i$ transmits a packet at time $t$. Then, a node $j$ can receive this packet successfully if and only if for any other  node $k$ transmitting simultaneously, {$|U_k -U_j | \geq (1 + \Delta)|U_i - U_j|$}, where $U_i$ is the location of node $i$, $|\cdot|$ denotes Euclidean distance, and $\Delta$ is a positive constant.
	During a successful transmission, the transmitter sends at a rate of $W$ bits per second, which is a constant independent of $n$.  
	Another model for transmission is the Physical
	Model~\cite{gupta}. Since these two models are essentially equivalent
	(assuming a path loss exponent of greater than 1 and equal node transmission powers in the Physical Model)~\cite{gupta}, we focus on the Protocol
	Model in this paper.

	To simplify our analysis, we adopt the {\em fluid model} for packet transmission considered in~\cite{elgamal}.  In the fluid model, we allow the size of the content unit, and therefore the sizes of the Interest Packets and Data Packets,  to be arbitrarily small, depending on the number of nodes in the network. Thus, the time required for transmitting an Interest Packet or Data Packet is much smaller than a time slot.  Nevertheless, a packet received by a node in a given time slot cannot be transmitted by the node until the next time slot.  Thus, all packets waiting for transmission at a given node will be transmitted by the node in one time slot.   The fluid model makes unnecessary detailed analysis of the scheduling of individual packets.  As explained below, we will specifically assume that the packet size scales in proportion to the per-node throughput of the achievable scheme.
	
	\section{Throughput and Delay}
	\label{comm}
	Transmission and caching in the wireless network are coordinated
	and controlled by a {\em scheme}. More precisely,
	a scheme $\pi$ is a sequence of policies $\{\pi_n\}$,
	where $\pi_n$ determines the (static) caching allocation, as well as
	the scheduling of transmissions in each time slot, for a network of $n$ nodes.
	For a given scheme, the throughput and delay are defined as follows:
	
	\begin{defi}[Throughput]
		\label{def2}
		For a given scheme $\pi_n$, let $B_{\pi_n}(i,t)$ be the total number of bits of all content objects received by the requesting node $i$ up to time $t$. 
		The long-term throughput of node $i$ is
		$$ \liminf_{t \to \infty}\frac{1}{t}B_{\pi_n}(i,t). $$
		The average throughput over all nodes is
		\begin{eqnarray*}
			\lambda'_{\pi_n}(n) & = & \frac{1}{n}\sum_{i=1}^{n}\liminf_{t \to \infty}\frac{1}{t}B_{\pi_n}(i,t).
		\end{eqnarray*}
		The throughput of $\pi_n$, is defined as the expectation over
		all realizations of node positions $\{U_1, U_2, \ldots, U_n\}$, of the corresponding average throughput:
		$$  \lambda_{\pi_n}(n)\triangleq E\left[ \lambda'_{\pi_n}(n) \right].
		$$
	\end{defi}
	
	\begin{defi}[Delay]
		For a given $\pi_n$, let $D_{\pi_n}(i,k)$ be the delay of the $k$-th request for any content object by node $i$ (measured from the moment the Interest Packet leaves $i$ for the closest holder until the corresponding Data Packet arrives at $i$ from the holder).  The delay (over all content requests) for node $i$ is $$\limsup_{r \to \infty}\frac{1}{r}\sum_{k=1}^{r}D_{\pi_n}(i,k).$$ 
		The average delay over all nodes is
		\begin{eqnarray*}
			D'_{\pi_n}(n) & = & \frac{1}{n}\sum_{i=1}^{n}\limsup_{r \to \infty}\frac{1}{r}\sum_{k=1}^{r}D_{\pi_n}(i,k).
		\end{eqnarray*}
		The delay of $\pi_n$ is defined as the expectation over
		all realizations of node positions $\{U_1, U_2, \ldots, U_n\}$, of the corresponding average delay:
		\begin{equation*}
		D_{\pi_n}(n)\triangleq E\left[ D'_{\pi_n}(n) \right].
		\end{equation*}
	\end{defi}
	
	The throughput and delay quantities $\lambda'_{\pi_n}(n)$ and $D'_{\pi_n}(n)$
	are random variables, since they depend on the realization of node positions. 
	The quantities  $\lambda_{\pi_n}(n)$ and $ D_{\pi_n}(n)$ are ensemble averages.   Note that due to the stationarity and ergodicity of the content request sequences, the throughput and delay quantities in Definitions 1 and 2 are well defined.  That is, the random content request sequences are averaged over in the throughput and delay definitions.  
	 To study the asymptotical
	behavior of $\lambda_{\pi_n}(n)$ and $ D_{\pi_n}(n)$,
	we will let the number of nodes
	$n$ go to infinity. 
	
	Recall from Section II that for each $m$, there are $X_m$ holders distributed uniformly and independently in the network area.  Furthermore, the sets of holders are chosen independently across different contents.  To analyze the throughput and delay scaling of the content-centric wireless network, we combine this caching allocation scheme with an achievable multi-hop, TDM, cellular transmission scheme~\cite{elgamal}.  In this scheme, the unit torus is divided into square cells, each with area $a(n)$.\footnote{We ignore the imperfection of the square cells as well as edge effects due to $1/a(n)$ not being a perfect square.}   We use the following sequence of lemmas to construct the transmission and caching scheme yielding the main throughput and delay scaling result.  
	
	The following lemma from~\cite{elgamal} shows that with an appropriately chosen cell area $a(n)$, each cell has at least one node w.h.p., so that multi-hop relaying of packets through adjacent cells is possible.
	\begin{lemma}\cite{elgamal}
		\label{lemma1}
		If $a(n)\geq 2 \log n/n$, then each cell has at least one node w.h.p..
	\end{lemma}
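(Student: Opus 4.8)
The plan is to run a straightforward first-moment (union bound) argument over the cells. The unit torus is partitioned into $1/a(n)$ cells, each of area $a(n)$, and the $n$ nodes are placed independently and uniformly. Hence, for any fixed cell $c$, each node falls in $c$ with probability exactly $a(n)$, independently of the others, so the probability that $c$ is empty is exactly $(1-a(n))^n$. Using the elementary bound $1-x\le e^{-x}$, this is at most $e^{-n a(n)}$.

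Next I would take a union bound over all $1/a(n)$ cells to get that the probability that \emph{some} cell is empty is at most $\frac{1}{a(n)}\,e^{-n a(n)}$. Substituting the hypothesis $a(n)\ge 2\log n/n$ yields $e^{-n a(n)}\le e^{-2\log n}=n^{-2}$, while $\frac{1}{a(n)}\le \frac{n}{2\log n}\le n$, so the probability of an empty cell is at most $n\cdot n^{-2}=1/n$, which tends to $0$ as $n\to\infty$. Therefore, with probability approaching $1$, every cell contains at least one node.

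The one point to handle with a bit of care is that the hypothesis only lower-bounds $a(n)$, so one must check the estimate for all admissible $a(n)$, not just the boundary value. This is immediate: the function $x\mapsto \frac{1}{x}e^{-nx}$ has negative derivative on $(0,1]$, hence is decreasing, so $\frac{1}{a(n)}e^{-n a(n)}$ is maximized (over $a(n)\ge 2\log n/n$) at $a(n)=2\log n/n$, where the computation above applies; for any larger $a(n)$ the probability is only smaller.

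I do not anticipate a genuine obstacle here — the entire argument is a Chernoff-type occupancy estimate, and the only thing that could conceivably go wrong is the union bound failing to be summable, which the choice $a(n)\ge 2\log n/n$ is precisely designed to prevent (any constant factor smaller than $2$, e.g.\ $a(n)\sim c\log n/n$ with $c<1$, would make $\frac{1}{a(n)}e^{-na(n)}$ diverge). The footnoted imperfections of the square tiling and edge effects are negligible: they change cell areas and counts only by constant factors, which are absorbed without affecting the order of the bound.
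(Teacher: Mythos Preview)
Your argument is correct and is precisely the standard occupancy/union-bound proof: bound the empty-cell probability by $e^{-na(n)}$, union-bound over the $1/a(n)$ cells, and use $a(n)\ge 2\log n/n$ to make the result $O(1/n)$. The paper does not supply its own proof of this lemma---it simply cites it from \cite{elgamal}---and the argument there is exactly the one you have given, so there is nothing to compare.
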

	For $a(n)$ satisfying Lemma~\ref{lemma1}, we set the transmission radius to be $r(n) = \sqrt{8 a(n)}$.  This allows each node to transmit to nodes within its cell and to the 8 neighboring cells.  It is then clear that multi-hop packet relaying through adjacent cells can take place w.h.p.  
	
	The next lemma from~\cite{elgamal} makes possible the establishment of an interference-free TDM transmission schedule where each cell becomes active (i.e. any of the nodes in the cell transmits) regularly once every $N+1$ time slots, where $N$ is specified in Lemma~\ref{lemmaprotocol}, and no two simultaneously active cells interfere with each other.  Here, two simultaneously active cells interfere if the transmission of a node in one active cell affects the success of a simultaneous transmission by a node in the other active cell.
	\begin{lemma}\cite{elgamal}
		\label{lemmaprotocol}
		Under the Protocol model, the number of cells that interfere with any given cell is bounded above by a constant $N=16(1+\Delta)^2$, independent of $n$.
	\end{lemma}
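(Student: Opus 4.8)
The plan is to prove Lemma~\ref{lemmaprotocol} by a purely geometric packing argument, following~\cite{elgamal}. First I would fix the cellular structure set up above: the torus is partitioned into square cells of area $a(n)\ge 2\log n/n$, so every cell is nonempty w.h.p.\ (Lemma~\ref{lemma1}), and the transmission radius is $r(n)=\sqrt{8a(n)}$. With this choice, every scheduled transmission is between two nodes at Euclidean distance at most $r(n)$, and the receiver of a transmission whose transmitter lies in a cell $C$ necessarily lies in $C$ or in one of its eight neighbouring cells.

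Next I would translate ``cell $C'$ interferes with cell $C$'' into a distance constraint. If the two cells interfere, then (by definition of interference in the excerpt) there is a scheduled transmission from some $U_i\in C$ to some receiver $U_j$ with $|U_i-U_j|\le r(n)$, together with a simultaneously transmitting node $U_k\in C'$ for which the Protocol-Model condition fails, i.e.\ $|U_k-U_j|<(1+\Delta)|U_i-U_j|\le (1+\Delta)r(n)$ --- or the symmetric statement with the roles of $C$ and $C'$ interchanged. In either case the triangle inequality gives $|U_i-U_k|\le |U_i-U_j|+|U_j-U_k|<(2+\Delta)r(n)=(2+\Delta)\sqrt{8a(n)}$. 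Hence every interfering cell $C'$ must contain a point within Euclidean distance a fixed multiple of $\sqrt{a(n)}$ of a point of $C$, so all interfering cells are contained in a disk centred at $C$ whose radius is $\Theta(\sqrt{a(n)})$ with the implied constant depending only on $\Delta$.

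Finally I would count: the interfering cells are pairwise disjoint, each has area $a(n)$, and all lie inside a disk of radius $\Theta(\sqrt{a(n)})$, so dividing the area of the disk by $a(n)$ bounds their number by a constant depending on $\Delta$ alone, hence independent of $n$; a careful bookkeeping of the worst-case receiver position and of the cell geometry yields the stated value $N=16(1+\Delta)^2$. The edge effects from $1/a(n)$ not being a perfect square and from the torus wrap-around affect only lower-order terms and are absorbed exactly as in~\cite{elgamal}. The only mildly delicate point --- the ``hard part'' in an otherwise routine argument --- is keeping the constant sharp: one must exploit that interference in the Protocol Model is mutual (both ``$U_k$ corrupts reception at $U_j$'' and ``$U_i$ corrupts reception at the receiver of $U_k$'' have to be excluded) and that the receiver may sit anywhere in the reachable $3\times 3$ block, but none of this needs more than the triangle inequality and an area comparison.
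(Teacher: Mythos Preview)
The paper does not actually prove this lemma: it is quoted verbatim from~\cite{elgamal} and used as a black box, with no argument supplied. Your proposal is therefore not being compared against a proof in the paper but against the cited reference, and your geometric packing argument is precisely the standard one given there. The reduction via the triangle inequality to a bound $|U_i-U_k|<(2+\Delta)r(n)$ between a node in $C$ and a node in an interfering cell $C'$, followed by an area (or grid-count) comparison, is exactly right and yields a constant depending only on $\Delta$.

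One small caveat: your sketch establishes $N=O((1+\Delta)^2)$ cleanly, but the specific constant $16$ does not fall out of the disk-area division you describe; in~\cite{elgamal} it comes from counting grid cells in a square neighbourhood (using the $\ell^\infty$ geometry of the tiling) rather than from $\pi R^2/a(n)$. Since everything downstream in this paper uses only that $N$ is a constant independent of $n$, this discrepancy is immaterial, but if you want the stated numerical value you should replace the disk-packing step by a direct count of cells whose Chebyshev distance to $C$ is at most a suitable multiple of $(1+\Delta)$.
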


	We consider a transmission scheme where an Interest Packet requesting content object $m$ is forwarded along the direct line connecting the requesting node to the {\em closest} (in Euclidean distance) holder of content object $m$, using multi-hop communication.   The next lemma computes the expected Euclidean distance from a given node requesting content $m$ to the closest holder of content $m$. 
	\begin{lemma}
		\label{lemma2}
		Let $\chi_m$ be the set of holders of content $m$, independently and uniformly distributed in the unit-sized network area, where $X_m=|\chi_m|$. For any node requesting content $m$, the average Euclidean distance from the requesting node to the closest holder of content $m$ is $\Theta(\frac{1}{\sqrt{X_m}})$.
	\end{lemma}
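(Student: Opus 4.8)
\emph{Proof strategy.} The plan is to obtain the distribution of the nearest-holder distance directly from the fact that the $X_m$ holders are i.i.d.\ uniform on the unit torus, and then integrate its tail. Fix the requesting node; by translation invariance of the torus we may place it at the origin. Let $R$ be the Euclidean distance from the origin to the closest of the $X_m$ holders. Since the holders are placed independently and uniformly, $P(R>r)$ is the probability that none of them falls in the ball $B(0,r)$, i.e.\ $P(R>r)=(1-|B(0,r)|)^{X_m}$, where $|B(0,r)|$ is the area of the radius-$r$ ball in the torus. For $r\in[0,1/2)$ this ball is a genuine Euclidean disk (no wrap-around overlap), so $|B(0,r)|=\pi r^2$ and $P(R>r)=(1-\pi r^2)^{X_m}$; for $r\geq 1/2$ the ball still covers at least a fixed constant fraction $c_0>0$ of the unit area, so $P(R>r)\leq(1-c_0)^{X_m}$, and $R$ is deterministically bounded by the torus diameter.

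For the upper bound I would write $E[R]=\int_0^\infty P(R>r)\,dr$, split the integral at $r=1/2$, and estimate the two pieces separately. On $[0,1/2)$, using $1-\pi r^2\leq e^{-\pi r^2}$ gives $\int_0^{1/2}(1-\pi r^2)^{X_m}\,dr\leq\int_0^\infty e^{-\pi X_m r^2}\,dr=\tfrac{1}{2\sqrt{X_m}}$ by the Gaussian integral. The remaining piece is at most (torus diameter)$\times(1-c_0)^{X_m}$, which decays geometrically in $X_m$ and is therefore $o(1/\sqrt{X_m})$. Hence $E[R]=O(1/\sqrt{X_m})$.

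For the matching lower bound I would restrict the integral to the short interval $r\in[0,\varepsilon/\sqrt{X_m}]$ with $\varepsilon$ a small absolute constant. There $\pi r^2\leq\pi\varepsilon^2/X_m\leq 1/2$ for $\varepsilon$ small enough, so using the elementary inequality $(1-x)^{1/x}\geq 1/4$ for $x\in(0,1/2]$ we get $(1-\pi r^2)^{X_m}\geq(1/4)^{\pi\varepsilon^2}\geq 1/2$ once $\varepsilon$ is fixed small. Therefore $E[R]\geq\int_0^{\varepsilon/\sqrt{X_m}}\tfrac12\,dr=\tfrac{\varepsilon}{2\sqrt{X_m}}=\Omega(1/\sqrt{X_m})$. (Here we take $X_m\geq 1$; if the requesting node happens to be a holder of $m$ it is excluded, and one argues with $X_m-1$ points, which does not affect the order of growth for $X_m\geq 2$.)

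The main obstacle — really the only delicate point — is handling the torus geometry near and beyond $r=1/2$, where the radius-$r$ ball is no longer a clean disk of area $\pi r^2$: one must argue that its area is nonetheless bounded below by a positive constant $c_0$, so that the tail of $R$ is geometrically small, without needing the exact expression. Everything else reduces to the elementary bounds $1-x\leq e^{-x}$ and $(1-x)^{1/x}\geq 1/4$ together with the Gaussian integral.
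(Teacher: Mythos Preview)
Your argument is correct and complete; both the upper and lower bounds go through as stated. The approach, however, differs from the paper's. The paper computes the tail integral \emph{exactly}: it uses the substitution $\sqrt{\pi}\,\tau=\cos\theta$ to turn $\int_0^{1/\sqrt{\pi}}(1-\pi\tau^2)^{X_m}\,d\tau$ into the Wallis-type integral $\frac{1}{\sqrt{\pi}}\int_0^{\pi/2}(\sin\theta)^{2X_m+1}\,d\theta$, obtains the closed-form product $\frac{1}{\sqrt{\pi}}\cdot\frac{2X_m}{2X_m+1}\cdots\frac{2}{3}$, and then proves in a separate appendix that this product is $\Theta(1/\sqrt{X_m})$. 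You instead bound the integral directly via $1-x\le e^{-x}$ and the Gaussian integral for the upper bound, and a short-interval truncation with $(1-x)^{1/x}\ge 1/4$ for the lower bound. Your route is more elementary (no reduction formula, no auxiliary product estimate) and is also more careful about the torus geometry: the paper tacitly takes the ball area to be $\pi\tau^2$ all the way out to $\tau=1/\sqrt{\pi}$, which is not literally correct on a unit square torus once $\tau>1/2$, whereas you isolate and dispose of that regime separately. What the paper's approach buys is an explicit constant and an exact expression, which your bounding argument does not yield.
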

	\begin{proof}
		Please see Appendix \ref{a}.
	\end{proof}
	
\vspace{0.1in}
	Assume   $a(n) \geq 2 \log n/n$ and $r(n) = \sqrt{8 a(n)} \geq 4\sqrt{ \log n/n}$. Consider a fixed node $i$ requesting content object $m$. Let $L_{H,R}(i,m)$ be the straight line connecting $i$ to the closest holder of content $m$. From Lemma \ref{lemma2}, 
	\begin{equation}
	\label{length}
	E\left[|L_{H,R}(i,m)|\right]=\Theta\left(\frac{1}{\sqrt{X_m}}\right).
	\end{equation}
	where $|L|$ denotes the Euclidean length of line $L$. Let $H_{i,m}$ be the number of hops along a path (sequence of nodes) which originates at 	
	requester $i$ and ends at the closest holder of content $m$, and lies within the set of cells intersecting the $L_{H,R}(i,m)$ line, where there is {\em  exactly one node per cell} along the path.

By Lemma \ref{lemma1}, we can find at least one node per cell w.h.p.  Therefore, we can construct the described path w.h.p.

Note that since we are requiring the path to have exactly one node per cell, the path is not necessarily the shortest path (in terms of the number of hops) connecting requester $i$ and the closest holder of content $m$, which lies within the set of cells intersecting the $L_{H,R}(i,m)$ line.
On the other hand, we show in the following lemma that the expected value of $H_{i,m}$ is of the same order as the expected value of $H'_{i,m}$, where $H'_{i,m}$ is the minimum number of hops along the shortest path.

\color{black}
	\begin{lemma}
		\label{lemma_hops}
		For $a(n)\geq 2\log n/n$, and each $m = 1, \ldots, M$,
		\begin{equation}
		 E[H_{i,m}]=\Theta\left(E[H'_{i,m}]\right)= \Theta \left(\max{\left\lbrace\frac{1}{\sqrt{a(n)X_{m}}},1\right\rbrace}\right) \; w.h.p.  \label{formhops}
		\end{equation}
	\end{lemma}
	\begin{proof}
		Please see Appendix \ref{app_hops}.
	\end{proof}
	
	\vspace{0.1in}
	We now prove a key lemma, characterizing the number of $L_{H,R}(i,m)$ lines passing through each cell as $n$ becomes large.  The result
	may be seen as an analogue of Lemma 3 in~\cite{elgamal} for the wireless caching network environment.
	\begin{lemma}
		\label{lemma3}
		For $a(n)\geq 2 \log n/n$, the number of $L_{H,R}$ lines passing through each cell is $$\Theta\left(n\sum_{m=1}^{M}p_m \max{\{\sqrt{a(n)/X_m},a(n)\}}\right) \; w.h.p.$$  
	\end{lemma}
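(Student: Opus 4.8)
The plan is first to pin down the expected number of $L_{H,R}$ lines crossing a fixed cell, and then to show the actual count concentrates around this expectation with high probability, via a Chernoff bound together with a union bound over the $1/a(n)$ cells.

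For the expectation, fix a cell $c$ and let $Z_c$ denote the number of lines $L_{H,R}(i,\cdot)$ crossing $c$, where (as in the model) each node $i$ generates one request per slot, for object $m$ with probability $p_m$, independently across nodes; one could equally keep the request-rate weights and work with $\sum_{i,m}p_m\mathbf{1}\{L_{H,R}(i,m)\text{ meets }c\}$, which has the same mean. Then
\[
E[Z_c]=\sum_{i=1}^{n}\sum_{m=1}^{M}p_m\,\Pr\{L_{H,R}(i,m)\text{ meets }c\}.
\]
Since node and holder positions are uniform on the torus, for each $(i,m)$ the event that $L_{H,R}(i,m)$ meets a given cell is exchangeable over all $1/a(n)$ cells, so $\Pr\{L_{H,R}(i,m)\text{ meets }c\}=a(n)E[H_{i,m}]$, i.e.\ the expected number of cells the line crosses divided by the number of cells. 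Plugging in $E[H_{i,m}]=\Theta(\max\{1/\sqrt{a(n)X_m},1\})$ from the preceding lemma and using $a(n)\max\{1/\sqrt{a(n)X_m},1\}=\max\{\sqrt{a(n)/X_m},a(n)\}$ gives $E[Z_c]=\Theta\big(n\sum_{m=1}^{M}p_m\max\{\sqrt{a(n)/X_m},a(n)\}\big)$, the claimed order.

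For the concentration, the obstacle is that the lines are not independent across requesters: for a fixed object $m$, every requester routes to the same holder set $\chi_m$. I would remove this dependence by conditioning on the holder sets $\{\chi_m\}_{m=1}^{M}$. Given these, the pairs $(U_i,M_i)$ are i.i.d.\ across $i$, so $Z_c=\sum_{i}\mathbf{1}\{L_{H,R}(i,M_i)\text{ meets }c\}$ is a sum of $n$ independent $\{0,1\}$-valued variables, and a Chernoff bound yields
\[
\Pr\big\{\,|Z_c-E[Z_c\mid\{\chi_m\}]|>\tfrac12 E[Z_c\mid\{\chi_m\}]\,\big\}\le 2e^{-cE[Z_c\mid\{\chi_m\}]}
\]
for some constant $c>0$. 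Because $\max\{\sqrt{a(n)/X_m},a(n)\}\ge a(n)$ and $\sum_m p_m=1$, we have $E[Z_c]=\Omega(na(n))=\Omega(\log n)$ whenever $a(n)\ge 2\log n/n$; hence, provided $E[Z_c\mid\{\chi_m\}]$ stays of order $E[Z_c]$, the bound above is $n^{-\Omega(1)}$ and a union bound over the $\le n$ cells finishes the argument, the lower tail being handled by the symmetric Chernoff estimate.

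The main obstacle is precisely that last proviso: showing $E[Z_c\mid\{\chi_m\}]=n\sum_m p_m g_m(c,\chi_m)=\Theta(E[Z_c])$, where $g_m(c,\chi_m)=\Pr_U\{\text{the segment from a uniform requester to its nearest } m\text{-holder meets }c\}$ has unconditional mean $a(n)E[H_{i,m}]$. One always has $g_m\ge a(n)$ (the segment starts in the requester's own cell, which equals $c$ with probability $a(n)$), and for objects with $X_m=\Omega(1/a(n))$, where $E[H_{i,m}]=\Theta(1)$ and the segment can reach $c$ only from a bounded number of nearby cells, also $g_m=O(a(n))$; so $g_m=\Theta(a(n)E[H_{i,m}])$ holds there deterministically. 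The delicate case is objects with few holders, where $a(n)E[H_{i,m}]=\sqrt{a(n)/X_m}\gg a(n)$: here I would argue that with high probability the requesters whose segment crosses $c$ fill a region of area $\Theta(\sqrt{a(n)/X_m})$ --- essentially a strip of width $\sim\sqrt{a(n)}$ and length $\sim 1/\sqrt{X_m}$ (the Voronoi scale of the $m$-holders) through $c$ --- while atypically crowded holder configurations near $c$ have vanishing probability by Poisson-type tail bounds. Carrying this out uniformly over all $1/a(n)$ cells is the heart of the matter; it is the analogue of Lemma~3 in \cite{elgamal}, made more intricate here by the sharing of holders across all requesters for a given object.
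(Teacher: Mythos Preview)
Your expectation computation via torus symmetry, $\Pr\{L_{H,R}(i,m)\text{ meets }c\}=a(n)E[H_{i,m}]$, is exactly what the paper does: it writes the double-counting identity $\sum_i\sum_j Y_{i,m}^j=\sum_i H_{i,m}$, takes expectations, and uses symmetry over cells to extract $E[Y_{i,m}^j]=a(n)E[H_{i,m}]$, then averages over $m$ with weights $p_m$.

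For concentration, however, the paper takes a shorter and less careful route than you. It does \emph{not} condition on the holder sets; it simply asserts that because node positions and content requests are i.i.d.\ across $i$, the indicators $Y_{i,m_i}^j$ are independent in $i$, and applies Chernoff directly to the unconditional mean $E[Y]$. The dependence through shared holder positions that you (correctly) flag is not addressed in the paper. So your conditioning is the more honest approach, but it creates the extra obligation of showing $E[Z_c\mid\{\chi_m\}]=\Theta(E[Z_c])$ uniformly over cells, which the paper never confronts; the Voronoi-strip heuristic you sketch for the few-holder regime is plausible, but the paper supplies nothing comparable because it has already sidestepped the issue.

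One quantitative point you should tighten: your lower bound $E[Z_c]\ge na(n)\ge 2\log n$ combined with a fixed $\delta=\tfrac12$ gives a per-cell Chernoff tail of only $n^{-c}$ with $c\approx\delta^2/3=1/12$, which does not survive a union bound over $1/a(n)=\Theta(n/\log n)$ cells. The paper handles this differently: it argues that $E[Y]=\omega(na(n))=\omega(\log n)$---using the caching constraint $\sum_m X_m\le nK$, the bound $X_m\le 1/a(n)$, and $M=\Theta(n^\beta)$ with $0<\beta<1$---and then takes $\delta=\sqrt{6\log n/E[Y]}=o(1)$, which yields a clean $1/n^2$ tail per cell and makes the union bound go through.
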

	\begin{proof}
		For a given content request vector $(m_1,m_2,\ldots,m_n)$ at time $t$ and a given node $i$, we know that $H_{i,m_i}=H_{i,m}$, w.p. $p_m$, for $m=1,2,\ldots,M$. Therefore, 
		\begin{IEEEeqnarray}{+rCl+x*}
			\label{EHMi}
			E[H_{i,m_i}]&=& \sum_{m=1}^{M}p_m E[H_{i,m}]\nonumber\\
			&=& \Theta\left(\sum_{m=1}^{M}p_m\max{\left\lbrace\frac{1}{\sqrt{a(n)X_{m}}},1\right\rbrace}\right). \label{EH}
		\end{IEEEeqnarray}
		There are $1/a(n)$ cells. Fix a cell $j$ and let $Y_{i,m_i}^j$ be the indicator of the event that the $L_{H,R}(i,m_i)$ line passes through cell $j$. That is,
		\begin{displaymath}
			Y_{i,m_i}^j =
			\begin{cases}
				1, & \text{if } L_{H,R}(i,m_i) \text{ passes through cell } j \\
				0, & \text{otherwise}\\
			\end{cases}
		\end{displaymath}
		for $1\leq i \leq n$, $1\leq j \leq 1/a(n)$ and $1\leq m_i\leq M$. We know that $Y_{i,m_i}^j=Y_{i,m}^j$, w.p. $p_m$, for $m=1,2,\ldots,M$. Hence, we obtain $E[Y_{i,m_i}^j]= \sum_{m=1}^{M}p_m E[Y_{i,m}^j]$. Summing up the total number of hops for any $m$ in two different ways gives us:
		\begin{equation}
			\label{summation}
			\sum_{i=1}^{n}\sum_{j=1}^{1/a(n)}Y_{i,m}^j =\sum_{i=1}^nH_{i,m}.
		\end{equation}
		Taking the expectation on the both sides of (\ref{summation}), and noting that $E[H_{i,m}]$ is the same for each node $i$ and $E[Y_{i,m}^j]$ is equal for every $i$ and $j$ due to symmetry of the torus, we have
		\begin{equation}
			\sum_{i=1}^{n}\sum_{j=1}^{1/a(n)}E\left[ Y_{i,m}^j\right]=\sum_{i=1}^n E\left[ H_{i,m}\right]. \nonumber 
		\end{equation}
		\begin{equation}
			nE[Y_{i,m}^j]/a(n)=nE[H_{i,m}]. \nonumber
		\end{equation}
		Therefore,
		\begin{IEEEeqnarray}{+rCl+x*}
			E[Y_{i,m}^j]&=& {a(n)}\cdot E\left[H_{i,m}\right] \nonumber\\
			& = & \Theta\left(  \max{\{\sqrt{a(n)/X_m},a(n)\}}\right). \IEEEeqnarraynumspace
		\end{IEEEeqnarray}
		Now, 
		\begin{IEEEeqnarray}{+rCl+x*}
			E[Y_{i,m_i}^j]&=& \sum_{m=1}^{M}p_m E[Y_{i,m}^j]\nonumber\\
			&=& \Theta\left(\sum_{m=1}^{M}p_m\max{\lbrace\sqrt{a(n)/X_{m}},a(n)\rbrace}\right).
		\end{IEEEeqnarray}
		The total number of $L_{H,R}$ lines passing through a fixed cell $j$, is given by $Y=\sum_{i=1}^n Y_{i,m_i}^j$. Hence, $E[Y]= \Theta(n\sum_{m=1}^M p_m \max{\lbrace\sqrt{a(n)/X_{m}},a(n)\rbrace} )$.   Recall that nodes are independently and uniformly distributed in the unit-sized network area and requesters request contents independently from one another.  Moreover, across different contents, the sets of holders are chosen independently.  Therefore, it can be shown that for each cell $j$, $(Y_{i,m_i}^j)_{i=1,\cdots,n}$ is a set of independent random variables satisfying
$0\leq Y_{i,m_i}^j\leq 1$. \color{black} Applying the Chernoff bound yields \cite{chernoff}
		\begin{equation}
			\label{delta}
			P\{Y>(1+\delta)E[Y]\}\leq \exp\left(-\frac{\delta^2E[Y]}{3}\right).
		\end{equation}
		Choosing $\delta=\sqrt{6\log n/E[Y]}$, we are guaranteed that $\delta=o(1)$. This is true as we are assuming that $a(n)=\Omega(\log n/n)$. Also, as explained later, there is no need for any content object to have more than $\Theta(1/a(n))$ holders. Due to the total caching capacity constraint, $\sum_{m=1}^{M}X_m\leq nK$, and the fact that $M=\Theta(n^{\beta})$, where $0<\beta<1$, we are assured that $E[Y]=\omega(na(n))$, or equivalently, $E[Y]=\omega(\log n)$, resulting in $\delta=o(1)$. Substituting $\delta$ in (\ref{delta}), we have
		\begin{equation}
			P\{Y>(1+\delta)E[Y]\}\leq 1/n^2.
		\end{equation}
		Therefore, $Y=O(E[Y])$ with probability $\geq 1-1/n^2$. Similarly, by applying the Chernoff bound to the lower tail~\cite{chernoff}, we have
		\begin{equation}
			\label{delta2}
			P\{Y<(1-\delta)E[Y]\}\leq \exp\left(-\frac{\delta^2E[Y]}{2}\right).
		\end{equation}
		Applying similar techniques as above, we can show that $Y=\Omega(E[Y])$ with probability $\geq 1-1/n^2$.  Now applying the union bound over all $8/r^2(n)$ cells, we see that the number of  $L_{H,R}$ lines passing through each cell of the network is $$\Theta(E[Y])= \Theta\left(n\sum_{m=1}^M p_m \max{\lbrace\sqrt{a(n)/X_{m}},a(n)\rbrace}\right).$$ with probability $\geq 1-1/n$.
	\end{proof}

	
	We now present in detail the achievable caching and transmission scheme.  The transmission scheme can be seen as an analogue of Scheme 1 in  \cite{elgamal}, for the wireless caching network environment.  The scheme is parameterized by the cell area $a(n)$, where $a(n) = \Omega(\log n/n)$ and $a(n) \leq 1$. 
	
	\subsection{Caching Scheme}
	
	For each content $m$, choose one of the  $\binom{n}{X_m}$ sets of $X_m$ nodes, uniformly at random and independent of the set choices for all other contents, and designate the nodes in the chosen set as the holders of content $m$.  This ensures that for each $m$, there are exactly $X_m$ holders distributed uniformly and independently in the network. In addition, the sets of holders are chosen independently across different contents.

\subsection{Transmission Scheme} 
\begin{enumerate}
	\item Divide the unit torus using a square grid into square cells, each with area $a(n)$.
	\item For the given realization of the random network, check that there is no empty cell.  
	\item If there is an empty cell, then use a time-division policy, where each of the $n$ requesters communicates directly with the closest holder of the requested content object, in a round-robin fashion.
	\item Otherwise, use the following policy $\pi_n$:
	\begin{enumerate}
		\item Each cell becomes active regularly once every $1+N$ time-slots (Lemma \ref{lemmaprotocol}).  Cells which are sufficiently far apart become active simultaneously.   That is, the scheme uses TDM between neighboring cells.
		\item Requesting nodes transmit Interest Packets to the closest holders by hops along the adjacent cells intersecting the 
		$L_{H,R}$ lines. Similarly, the holders transmit Data Packets to the requesting nodes along the same path taken by their corresponding Interest Packets, in the reverse direction.
		\item  Each time slot is split into two sub-slots. In the first sub-slot, each active cell transmits a single Interest Packet for each of the $L_{H,R}$ lines passing through the cell toward the closest holder. In the second sub-slot, the active cell transmits a single Data Packet for each of the $L_{H,R}$ lines passing through the cell toward the requesting node.
	\end{enumerate}  
\end{enumerate}

	We now derive the throughput and delay performance of the achievable transmission and caching scheme described above, for a given feasible caching allocation $\{X_m\}_{m=1}^M$.
	  We further show that the achievable transmission/caching scheme attains the order-optimal throughput and delay performance, among all  transmission/caching schemes where for each $m$, the $X_m$ holders are independently and uniformly distributed in the network area, and each node has the same transmission radius $r(n) = \sqrt{8a(n)}$. As explained in Section \ref{optimizationsection}, we then optimize the delay and throughput of the achievable scheme simultaneously by selecting optimal $(X_m)_{m=1}^M$ subject to caching constraints. 
	\color{black}
	
	\begin{theorem}
		\label{throughputlemma}
		For $a(n) \geq 2 \log n /n$, the throughput and delay scaling of the achievable caching and transmission scheme are given by
		\begin{equation}
		\label{lowerbound}
		\lambda(n)= \Theta \left(\frac{1}{n\sum_{m=1}^M p_m \max{\lbrace\sqrt{a(n)/X_{m}},a(n)\rbrace}} \right) \; w.h.p.
		\end{equation}
		\begin{equation}
		\label{delaybound}
		D(n) =\Theta\left(\sum_{m=1}^{M}p_m \max{\left\lbrace\frac{1}{\sqrt{a(n)X_m}},1\right\rbrace}\right) \; w.h.p.
		\end{equation}
		
		 Furthermore, the achievable transmission/caching scheme attains the order-optimal throughput and delay performance, among all  transmission/caching schemes where for each $m$, the $X_m$ holders are independently and uniformly distributed in the network area, and each node has the same transmission radius $r(n) = \sqrt{8a(n)}$.
	\end{theorem}
	\begin{proof}
		\label{app_throughputlemma}
		First note that if the time-division policy with direct communication is used, then the throughput is $W/n$ with a delay of 1. But since this happens with a vanishingly low probability, as shown by Lemma \ref{lemma1}, the throughput and
		delay for the achievable scheme are determined by that of policy $\pi_n$.
		When policy $\pi_n$ is used, each cell
		has at least one node. This assures us that requester-holder pairs can communicate with each other by hops along adjacent
		cells on their $L_{H,R}$ lines. From Lemma \ref{lemmaprotocol}, each cell gets to transmit packets every $1+ N$ time-slots. Hence, the cell throughput is $\Theta(1)$. The total traffic through each cell is due to all the $L_{H,R}$ lines passing through the cell, which is  $\Theta(n\sum_{m=1}^{M}p_m \max{\{\sqrt{a(n)/X_m},a(n)\}})$ w.h.p. This shows that 
		\begin{equation}
		\lambda(n)= \Theta \left(\frac{1}{n\sum_{m=1}^M p_m \max{\lbrace\sqrt{a(n)/X_{m}},a(n)\rbrace}} \right) w.h.p.
		\end{equation}
		\color{black}
		{Substituting $a(n) = r^2(n)/8$, it follows that
			\begin{equation}\label{lowerbound_r}
			\lambda(n)= \Theta \left(\frac{1}{n\sum_{m=1}^M p_m \max{\lbrace r(n)/\sqrt{X_{m}},r^2(n)\rbrace}} \right) w.h.p.
			\end{equation}}
		
		Recall that by Lemma \ref{lemmaprotocol}, each cell can be active once every $N+1$ time-slots, where $N$ is constant and independent of $n$.  As we are assuming that packets scales in proportion to the throughput $\lambda(n)$ (fluid model), each packet arriving at a node in the cell departs in the
		next active time-slot of the cell. Hence, the packet delay is  $N+1$ times the number of hops from the requester to the holder. For a given realization of the random network, where node $i$ is requesting $m_i$ for $i=1,2,\ldots,n$, and $m_i \in \{1,2,\ldots,M\}$, let $h_{i,m_i}$ be the number of hops from the requester $i$ to its closest holder of content $m_i$ in the given realization. Furthermore, since the Data Packet takes  the same path as the corresponding Interest Packet in reverse, the average delay of the network realization is given by two times the mean sample of the $h_{i,m_i}$'s, i.e. $\frac{2}{n}\sum_{i=1}^{n}h_{i,m_i}$. As $n \rightarrow \infty$, by the Law of Large Numbers, 
		\begin{equation}
		\frac{2}{n}\sum_{i=1}^{n}h_{i,m_i}\simeq 2E[H_{i,m_i}].
		\end{equation}
		Using (\ref{EH}), equation (\ref{delaybound}) follows.

		 Now consider any transmission/caching scheme where for each $m$, the $X_m$ holders are independently and uniformly 						distributed in the network area, and each node has the same transmission radius $r(n) = \sqrt{8a(n)}$.  We show that the throughput
		 and delay performance of such a scheme cannot be strictly better than~\eqref{lowerbound}-\eqref{delaybound} in an order sense.
		 
		 By Theorem 5.13 in \cite{gupta}, the common transmission radius must satisfy $r(n) = \Omega(\sqrt{\log n/n})$ in order to have 	      		 no isolated node in the network w.h.p.  Next, it is shown in \cite{gupta} that under the Protocol Model, the maximum number of 				simultaneous transmissions 	feasible in a dense random network is no more than
		\begin{equation}
		\frac{1}{\frac{1}{4\pi}\cdot\frac{\pi\Delta^2 r^2(n)}{4}}=\frac{16}{\Delta^2 r^2(n)}.
		\end{equation}
		This is due to the fact that each transmission consumes an area of radius $\frac{\Delta}{2}r(n)$ around every transmitter, and at least \smash{$\frac{1}{4\pi}$} portion is within the unit torus. 
		

	Note that since each node transmits with radius $r(n)$, it follows from Lemma~4 that the minimum number of hops that an Interest Packet requesting content $m_i$ travels from requester $i$ to reach the closest holder is $H'_{i,m_i}$.  Due to symmetry on the torus, the bits per second being transmitted simultaneously by the whole network for all the contents must be at least $n\lambda(n)E[H'_{i,m_i}]$, where $\lambda(n)$ is the per-node throughput. Therefore, we have 
		\begin{equation}
		\label{constraint}
		n\lambda(n)E[H'_{i,m_i}]\leq \frac{W}{1+c} \cdot \frac{16}{\Delta^2 r^2(n)}.
		\end{equation}
		where $0<c \leq 1$ is the ratio of the Interest Packet size to the corresponding Data Packet size. 
		Since $H'_{i,m_i} = H'_{i,m}$ w.p. $p_m$, an upper bound on the per-node throughput is obtained:
		\begin{IEEEeqnarray}{+rCl+x*}
			\label{upperineq}
			\lambda (n) &\leq& \frac{W}{1+c} \cdot \frac{16}{\Delta^2 r^2(n)n\sum_{m=1}^{M}p_m E[H'_{i,m}]}.
		\end{IEEEeqnarray}
		By Lemma \ref{lemma_hops}, it follows that
		\begin{equation}
		\label{upperbound_d}
		\lambda (n)=O \left(\frac{1}{n\sum_{m=1}^M p_m \max{\{\frac{r(n)}{\sqrt{X_m}},r^2(n) \}}} \right),
		\end{equation}
		thus showing that the throughput attained by the achievable scheme in (\ref{lowerbound_r}) is order-optimal. 
		
		
		Now for the network delay: under the fluid model, the average delay is simply $2(N+1)$ times the number of hops.  Thus, by Lemma 4 and by symmetry, the average delay is lower bounded by $E[H'_{i,m_i}]$, which by Lemma 4, is equal in order to $E[H_{i,m_i}]$.  Thus, the delay attained by the achievable scheme in~\eqref{delaybound} is order-optimal.
				\color{black}
	\end{proof}
	

	Note that the per-node throughput and network delay given in Theorem 1 satisfy the following relation:
		\begin{equation}
			\label{tradeoff}
			D(n)\lambda(n)= \Theta((na(n))^{-1} ) \; w.h.p.,
			\end{equation} 
This holds for any feasible caching allocation set $(X_m)_{m=1}^M$. Equation \eqref{tradeoff} states that for a given $n$ and $a(n)$, maximizing throughput is equivalent to minimizing the network delay. In the next section, we find the optimized set $(X_m)_{m=1}^M$ which minimizes the delay, or equivalently maximizes the throughput. 
\color{black}	
	\section{Optimized Caching}
	\label{optimizationsection}
	
	We now optimize the delay and throughput of the achievable transmission and caching scheme described in Section~\ref{comm}, by selecting the appropriate $(X_m)_{m=1}^M$ subject to caching constraints.  We first relax the integer constraint on $(X_m)_{m=1}^M$, thus allowing $X_m$ to be a non-negative real number.\footnote{It can easily be shown that the integer constraint relaxation does not change the order of the optimal delay and throughput scaling.}  Furthermore, we enforce only the total caching constraint in~\eqref{eq:totalcache}, which is a relaxation of the per node caching constraint. 
	
	To illustrate the optimization process, we focus on the commonly used Zipf distribution as the content popularity distribution~\cite{asymptotic, italian}.  Let $p_m= m^{-\alpha}/H_{\alpha}(M)$, where $\alpha$ is the Zipf's law exponent, and $H_{\alpha}(M)=\sum_{i=1}^M i^{-\alpha}$ is a normalization constant, given by \cite{asymptotic} 
	\begin{equation}
	\label{H}
	H_{\alpha}(M)=
	\begin{cases}
	\Theta(1), & \alpha>1\\
	\Theta \left(\log M\right), & \alpha=1\\
	\Theta(M^{1-\alpha}), & \alpha<1\\
	\end{cases}
	\end{equation}
	
	As can be seen, for the case $X_m = \Omega (a^{-1}(n))$, $\lambda(n)$ and $D(n)$ are independent of the number of holders. Hence, there is no need to cache more than one copy of any given content object in any one cell.
	Also, note that by (\ref{tradeoff}), minimizing the delay is equivalent to maximizing the throughput.  We may obtain the minimum delay by solving the following optimization problem:
	\begin{equation}
	\begin{cases}
	\ \min_{\{X_m\}} \sum_{m=1}^M \frac{p_m}{\sqrt{a(n)X_m}} \\
	\ \text{subject to:}\\
	\ \sum_{m=1}^M X_m \leq nK \\
	\ 1 \leq X_m \leq a^{-1}(n)  \qquad \text{for } m=1,2,\ldots,M\\
	\end{cases}
	\label{optimization}
	\end{equation}
	
	As the objective function is strictly convex, we are assured that there is a unique global minimum. Defining the non-negative Lagrange multipliers $\lambda$ for the constraint $\sum_{m=1}^{M}X_m\leq nK$, and taking into account the  constraint $1 \leq X_m \leq a^{-1}(n)$, the necessary conditions for a minimum of $D$ with respect to $X_m$, $\forall m \in M$ are given by
	\begin{equation}
	\frac{\partial D}{\partial X_m}
	\begin{cases}
	\ \leq -\lambda \qquad \text{if } X_m=a^{-1}(n) \\
	\ = -\lambda \qquad \text{if } 1<X_m<a^{-1}(n)\\
	\ \geq -\lambda \qquad \text{if } X_m=1 \\
	
	\end{cases}
	\label{necessary1}
	\end{equation}
	
	For the Zipf distribution, it is clear that $p_m$ is strictly decreasing in $m$ and therefore so is $X_m$. 
	Hence, let $\mathcal{M}_1=\{1,2,\ldots,m_1-1\}$ be the set of content objects such that $X_m=a^{-1}(n)$ for $m\in\mathcal{M}_1$. Similarly, let $\mathcal{M}_2=\{m_1,m_1+1,\ldots,m_2-1\}$ and $\mathcal{M}_3=\{m_2,m_2+1,\ldots,M\}$ be the set of contents such that $1<X_m<a^{-1}(n)$ for $m\in\mathcal{M}_2$, and $X_m=1$ for $m\in\mathcal{M}_3$, respectively. From (\ref{necessary1}), we have $\forall m \in M$
	\begin{equation}
	\frac{p_m}{2\sqrt{a(n)X_m^3}}
	\begin{cases}
	\ \geq \lambda \qquad  \forall m\in\mathcal{M}_1\\
	\ = \lambda \qquad \forall m\in\mathcal{M}_2\\
	\ \leq \lambda \qquad \forall m\in\mathcal{M}_3 \\
	
	\end{cases}
	\label{necessary2}
	\end{equation}
	
	Using the equality for the case $m\in\mathcal{M}_2$, we obtain
	\begin{equation}
	\label{m1/m2}
	\frac{m_1}{m_2}\simeq (a(n))^{\frac{3}{2\alpha}}.
	\end{equation}
	
	Clearly from (\ref{necessary2}), we have $\lambda>0$ and hence, $\sum_{m=1}^{M}X_m=nK$. Combining this with (\ref{m1/m2}), we can derive $m_1$ and $m_2$. The optimal number of holders of content $m$, $X_m^*$, is then given by 
	\begin{equation}
	\label{Xm}
	X_{m}^*=
	\begin{cases}
	\  a^{-1}(n), \quad\qquad\quad m=1,2,\ldots, m_1-1\\
	\ \frac{p_m^{2/3}}{\sum_{j=m_1}^{m_2-1}p_j^{2/3}} nK', m=m_1,\ldots,m_2-1\\
	\ \quad 1, \quad\qquad\qquad m=m_2,\ldots,M
	\end{cases}
	\end{equation}
	where $K'\triangleq K-{(m_1-1)}\frac{a^{-1}(n)}{n}-\frac{(M-m_2+1)}{n}$.  The average delay is then w.h.p.:
	\begin{equation}
	\begin{split}
	D^*(n)=\Theta \left( \sum_{j=1}^{m_1-1}p_j +\right. \frac{\left(\sum_{j=m_1}^{m_2-1}p_j^{2/3}\right)^{3/2}}{\sqrt{nK'a(n)}}\\
	\left.+  \frac{\sum_{j=m_2}^{M}p_j}{\sqrt{a(n)} } \right).
	\end{split}
	\label{D}
	\end{equation}
	
	To gain more insight on the structure of the optimal solution, we have the following lemma.
	
	\begin{lemma}
		\label{m1m2lemma}
		As $n\rightarrow \infty$, the scaling of indices $m_1$ and $m_2$ is given by
		\begin{equation}
		\label{m1}
		m_1=
		\begin{cases}
		\ \Theta(\min\{M,na(n)\}), & \alpha>3/2\\
		\ \Theta\left(\min\{M,\frac{na(n)}{\log n}\}\right), & \alpha=3/2\\
		\ \Theta\left(\max\{1,\min\{M,na(n),\frac{(na(n))^{\frac{3}{2\alpha}}}{M^{\frac{3}{2\alpha} -1}}\}\}\right), & \alpha<3/2
		\end{cases}
		\end{equation}
		\begin{equation}
		\label{m2}
		m_2=
		\begin{cases}
		\ \min\{M+1,\frac{2\alpha-3}{2\alpha}nK(a(n))^{1-\frac{3}{2\alpha}}\}, & \alpha>3/2\\
		\ M+1, & \alpha\leq 3/2
		\end{cases}
		\end{equation}
	\end{lemma}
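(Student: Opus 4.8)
The plan is to pin down $m_1$ and $m_2$ from two relations. The first is the stationarity identity \eqref{m1/m2}, already established, which gives $m_1/m_2 \simeq a(n)^{3/(2\alpha)}$ and hence reduces the problem to finding $m_1$ alone. The second is the total-cache constraint $\sum_{m=1}^M X_m^\ast = nK$ (which binds because $\lambda>0$, as noted after \eqref{necessary2}). Substituting the optimal allocation \eqref{Xm} decomposes $nK$ into three pieces:
\[
(m_1-1)\,a^{-1}(n) \;+\; \sum_{m=m_1}^{m_2-1} X_m^\ast \;+\; (M-m_2+1) \;=\; nK .
\]
I would eliminate the Lagrange multiplier from the middle piece using the boundary value $X_{m_1}^\ast \simeq a^{-1}(n)$, which from \eqref{necessary2} gives $\lambda \simeq p_{m_1}a(n)/2$ and turns $X_m^\ast$ for $m\in\mathcal{M}_2$ into $a^{-1}(n)\,m_1^{2\alpha/3}\,m^{-2\alpha/3}$ (the Zipf normalizer $H_\alpha(M)$ in \eqref{H} cancels in the ratio $p_m/p_{m_1}$). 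The middle piece is then $\Theta\!\bigl(a^{-1}(n)\,m_1^{2\alpha/3}\sum_{m=m_1}^{m_2-1}m^{-2\alpha/3}\bigr)$, and the third piece is $O(M)=O(n^\beta)=o(nK)$, hence never dominant.

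The heart of the argument is the asymptotics of $\sum_{m=m_1}^{m_2-1}m^{-2\alpha/3}$, which changes character according to whether the exponent $2\alpha/3$ is above, equal to, or below $1$: for $\alpha>3/2$ it is $\Theta(m_1^{1-2\alpha/3})$, dominated by the small-index end with leading constant $\tfrac{3}{2\alpha-3}$; for $\alpha=3/2$ it is $\Theta(\log(m_2/m_1))=\Theta(\log(a^{-1}(n)))$; and for $\alpha<3/2$ it is $\Theta(m_2^{1-2\alpha/3})$, dominated by the large-index end. This is exactly the trichotomy in the statement. For $\alpha>3/2$, combining with \eqref{m1/m2} collapses the middle piece to $\Theta(m_1 a^{-1}(n))$ — the same order as the first piece — so the constraint reads $\tfrac{2\alpha}{2\alpha-3}\,m_1 a^{-1}(n)\simeq nK$, giving $m_1=\Theta(na(n))$ and then $m_2=\tfrac{2\alpha-3}{2\alpha}\,nK\,a(n)^{1-3/(2\alpha)}$; capping by $M$ and $M+1$ respectively yields the first cases of \eqref{m1} and \eqref{m2}. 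For $\alpha\le 3/2$ the interior region cannot terminate before $M$ (the single-copy tier $\mathcal{M}_3$ is empty), so $m_2=M+1$; the constraint becomes a two-term equation in $m_1$, and solving it — using $\log(m_2/m_1)=\Theta(\log n)$ for $\alpha=3/2$, which is precisely the extra factor demoting $m_1$ by $\log n$, and balancing $m_1 a^{-1}(n)$ against $a^{-1}(n)\,m_1^{2\alpha/3}M^{1-2\alpha/3}$ for $\alpha<3/2$ — produces the nested $\min/\max$ expressions in the remaining cases of \eqref{m1}.

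I expect the main obstacle to be the bookkeeping of the case analysis rather than any individual estimate. In each regime one must identify which of the three contributions to $nK$ is of largest order, solve for $m_1$ (hence $m_2$ via \eqref{m1/m2}), and then check self-consistency — that the resulting pair satisfies $1\le m_1< m_2\le M+1$. Where it does not, a degenerate configuration takes over: $m_1$ saturating at $M$ (every content replicated once per cell, $\mathcal{M}_2,\mathcal{M}_3$ empty), $m_2$ saturating at $M+1$ (no content kept at a single copy, $\mathcal{M}_3$ empty), or $m_1$ forced down to $1$ (no content replicated in every cell); the competing ceiling $m_1=O(na(n))$, visible directly from the first piece alone, also enters here. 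These are exactly the boundary values recorded by the $\min$ and $\max$ operators in \eqref{m1} and \eqref{m2}. A secondary split at $\alpha=1$ in \eqref{H} affects only $K'$ and the final delay formula \eqref{D}, not $m_1$ and $m_2$, since $H_\alpha(M)$ cancels out of the interior allocation; once the dominant balance is fixed, extracting $m_1$ and $m_2$ is a routine computation.
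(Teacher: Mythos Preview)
Your proposal is correct and follows essentially the same route as the paper. You reach the key balance $m_1^{2\alpha/3}\bigl[H_{2\alpha/3}(m_2-1)-H_{2\alpha/3}(m_1-1)\bigr]\simeq nK'a(n)$ by decomposing the total cache budget and pinning the interior allocation with the boundary value $X_{m_1}^\ast\simeq a^{-1}(n)$, whereas the paper derives the same relation (its (\ref{se1})) directly from the sandwich $X_{m_1}<a^{-1}(n)\le X_{m_1-1}$; the subsequent trichotomy on $2\alpha/3\gtrless 1$ and the self-consistency checks that force $m_2=M+1$ for $\alpha\le 3/2$ are identical in both arguments.
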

	\begin{proof}
		Refer to Appendix \ref{app_m1m2lemma}
	\end{proof}
	We can now compute the optimized delay and throughput for the achievable scheme, assuming
	$M=\Theta(n^\beta)$ where $0<\beta<1$, under the Zipf popularity distribution.
	
	\begin{theorem}
		\label{theorem5}
		For $a(n) \geq 2 \log n /n$, the throughput and delay of the proposed scheme using Zipf distribution are  w.h.p.:
		\begin{equation}
		\label{delayadhoc}
		D^*(n)=
		\begin{cases}
		\Theta(1), & \alpha>3/2\\
		\Theta\left(\max\{1,\frac{(\log M)^{3/2}}{\sqrt{na(n)}}\}\right), & \alpha=3/2\\
		\Theta \left(\max\{1,\frac{M^{3/2-\alpha}}{\sqrt{na(n)}}\}\right), & 1<\alpha<3/2\\
		\Theta \left(\max\{1,\frac{\sqrt{M}}{\log M\sqrt{na(n)}}\}\right), & \alpha=1\\
		\Theta  \left(\max\{1,\sqrt{\frac{M}{na(n)}}\}\right), & \alpha<1\\
		\end{cases}
		\end{equation}
		\begin{equation}
		\label{throughputadhoc}
		\lambda^*(n)=
		\begin{cases}
		\Theta(\frac{1}{na(n)}), & \alpha>3/2\\
		\Theta  \left(\max\{\frac{1}{n},\frac{1}{(\log M)^{3/2}\sqrt{na(n)}}\}\right), & \alpha=3/2\\
		\Theta \left(\max\{\frac{1}{n},\frac{M^{\alpha-3/2}}{\sqrt{na(n)}}\}\right), & 1<\alpha<3/2\\
		\Theta \left(\max\{\frac{1}{n},{\frac{\log M}{\sqrt{Mna(n)}}}\}\right), & \alpha=1\\
		\Theta \left(\max\{\frac{1}{n},\frac{1}{\sqrt{Mna(n)}}\}\right), & \alpha<1\\
		\end{cases}
		\end{equation}
	\end{theorem}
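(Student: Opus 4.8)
The plan is to feed the thresholds $m_1,m_2$ from Lemma~\ref{m1m2lemma} into the optimized-delay formula~\eqref{D}, to estimate each of its three terms by means of truncated Zipf sums, and finally to read off $\lambda^*(n)$ from the throughput--delay identity~\eqref{tradeoff}. Recall that~\eqref{D} decomposes $D^*(n)$ into a \emph{fully-replicated} term $\sum_{j=1}^{m_1-1}p_j$ (these objects have $X_m^*=a^{-1}(n)$, contributing one hop each), a \emph{partially-replicated} term $\bigl(\sum_{j=m_1}^{m_2-1}p_j^{2/3}\bigr)^{3/2}/\sqrt{nK'a(n)}$, and a \emph{singly-replicated} term $\bigl(\sum_{j=m_2}^{M}p_j\bigr)/\sqrt{a(n)}$. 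Under the Zipf law each of these is a ratio of partial harmonic sums: with $\gamma\triangleq 2\alpha/3$, $\sum_{j=a}^{b}p_j^{2/3}=H_\alpha(M)^{-2/3}\bigl(H_\gamma(b)-H_\gamma(a-1)\bigr)$ and $\sum_{j=a}^{b}p_j=H_\alpha(M)^{-1}\bigl(H_\alpha(b)-H_\alpha(a-1)\bigr)$, so the whole computation reduces to repeated use of the three-regime estimate $H_\gamma(N)=\Theta(1),\ \Theta(\log N),\ \Theta(N^{1-\gamma})$ according as $\gamma>1$, $\gamma=1$, $\gamma<1$, which is exactly~\eqref{H}. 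The two thresholds of this estimate --- at $\gamma=1$, i.e.\ $\alpha=3/2$, for the $2/3$-power sum, and at $\alpha=1$ for the normalizing factor $H_\alpha(M)$ --- are precisely what split the statement into five cases.

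For $\alpha>3/2$ we have $\gamma>1$ and $H_\alpha(M)=\Theta(1)$. Since $na(n)\ge 2\log n\to\infty$ forces $m_1\to\infty$, the fully-replicated term equals $1-\Theta(m_1^{1-\alpha})=\Theta(1)$; the partially-replicated term is at most $\Theta\bigl(m_1^{3/2-\alpha}/\sqrt{na(n)}\bigr)=o(1)$ because its exponent of $m_1$ is negative; and the singly-replicated term is $0$ if $m_2=M+1$, while if $m_2\le M$ it is at most $\Theta\bigl(m_2^{1-\alpha}/\sqrt{a(n)}\bigr)$ with $m_2=\Theta\bigl(nK(a(n))^{1-3/(2\alpha)}\bigr)$, and collecting the exponents of $n$ and $a(n)$ over the range $a(n)\in[2\log n/n,1]$ shows this is $O(1)$ exactly when $\alpha\ge 3/2$. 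Hence $D^*(n)=\Theta(1)$, which is the first branch of~\eqref{delayadhoc}.

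For $\alpha\le 3/2$, Lemma~\ref{m1m2lemma} gives $m_2=M+1$, so the singly-replicated term vanishes identically; the fully-replicated term is always $O(1)$ (it is at most $1$); and, using $K'=\Theta(1)$ --- which follows, as in the proof of Lemma~\ref{m1m2lemma}, from $m_1=O(na(n))$ and $M=o(n)$ --- the partially-replicated term equals $\bigl(H_\alpha(M)^{-2/3}(H_\gamma(M)-H_\gamma(m_1-1))\bigr)^{3/2}/\sqrt{na(n)}$ with $\gamma=2\alpha/3\le 1$. When Lemma~\ref{m1m2lemma} gives $m_1=o(M)$, the subtracted tail $H_\gamma(m_1-1)$ is of strictly lower order, so this term is $\Theta\bigl((H_\alpha(M)^{-2/3}H_\gamma(M))^{3/2}/\sqrt{na(n)}\bigr)$, which is $(\log M)^{3/2}/\sqrt{na(n)}$ at $\alpha=3/2$, $M^{3/2-\alpha}/\sqrt{na(n)}$ for $1<\alpha<3/2$, $\sqrt{M}/(\log M\sqrt{na(n)})$ at $\alpha=1$, and $\sqrt{M/(na(n))}$ for $\alpha<1$; when instead Lemma~\ref{m1m2lemma} forces $m_1=\Theta(M)$ --- equivalently when $na(n)$ is at least of the order of $M$ up to polylog factors --- the crude bound $\sum_{j=m_1}^{M}p_j^{2/3}\le\sum_{j=1}^{M}p_j^{2/3}$ makes the partially-replicated term $O(1)$. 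Since each summand of~\eqref{delaybound} is $\ge p_m$, we always have $D^*(n)\ge 1$, and combining this floor with the estimates just obtained gives the $\max\{1,\cdot\}$ expressions in~\eqref{delayadhoc}. The throughput~\eqref{throughputadhoc} then follows by dividing through the trade-off $D^*(n)\lambda^*(n)=\Theta((na(n))^{-1})$ of~\eqref{tradeoff} and invoking the unconditional bound $\lambda^*(n)=\Omega(1/n)$ supplied by the round-robin fallback policy of the achievable scheme (used with vanishing probability, as in the proof of Theorem~\ref{throughputlemma}), which accounts for the $1/n$ inside the outer $\max$.

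The step I expect to be the main obstacle is the bookkeeping for $\alpha\le 3/2$: one must track, over all five $\alpha$-regimes and the two regimes of $m_1$ relative to $M$, whether $H_{2\alpha/3}(m_1-1)$ is negligible against $H_{2\alpha/3}(M)$, and at the same time check that $K'$ and the fully-replicated term remain $\Theta(1)$, so that the partially-replicated term is genuinely the only contribution that can exceed the constant floor; everything else is routine substitution of~\eqref{H} and Lemma~\ref{m1m2lemma} into~\eqref{D} and~\eqref{tradeoff}.
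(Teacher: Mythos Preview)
Your proposal is correct and follows essentially the same route as the paper: substitute the Zipf law into the three-term decomposition~\eqref{D}, estimate each truncated harmonic sum via~\eqref{H} using the thresholds $m_1,m_2$ from Lemma~\ref{m1m2lemma}, and recover the throughput from the trade-off~\eqref{tradeoff}. Your treatment is in fact slightly more careful than the paper's in explicitly handling the regime $m_1=\Theta(M)$ and in justifying the floor $D^*(n)\ge 1$; one small quibble is that the $1/n$ inside the outer $\max$ of~\eqref{throughputadhoc} does not come from the round-robin fallback (which is invoked only on a vanishing-probability event and so does not set the w.h.p.\ throughput) but simply drops out of~\eqref{tradeoff} once $D^*(n)$ is known---and in fact is never the active branch under the standing assumptions $M=o(n)$, $a(n)\le 1$.
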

	\begin{proof}  
		We prove that the average delay is given by (\ref{delayadhoc}).  The average throughput given in (\ref{throughputadhoc}) can be calculated easily by equation (\ref{tradeoff}). Substituting for the $p_j$'s in equation (\ref{D}) using the Zipf distribution, we obtain
		\begin{IEEEeqnarray}{+rCl+x*}
			\label{Dz}
			D&=&\frac{H_{\alpha}(m_1-1)}{H_{\alpha}(M)}+\frac{[H_{\frac{2\alpha}{3}}(m_2-1)-H_{\frac{2\alpha}{3}}(m_1-1)]^{3/2}}{\sqrt{nK'a(n)}H_{\alpha}(M)}\nonumber\\
			&+&
			\frac{H_{\alpha}(M)-H_{\alpha}(m_2-1)}{\sqrt{a(n)}H_{\alpha}(M)}.
		\end{IEEEeqnarray}
		where $K'=K-\frac{(m_1-1)}{2\log n}-\frac{(M-m_2+1)}{n} = \Theta(1)$.  Let the three expressions on the RHS of~\eqref{Dz} be denoted by $D_1$, $D_2$, and $D_3$, respectively.
		
		Clearly, $D_1= \Theta(1), \forall\alpha>0$. Also, if $a(n) =1$ then $m_2 =m_1+1$, and $D_2 = 0$. It can easily be shown that $D=\Theta(1)$, and $\lambda = W/n$, which coincides with the result of time-division with direct communication policy. Hence, we assume here that $a(n)<1$. By Lemma \ref{m1m2lemma}, we know that for $\alpha \leq 3/2$, $m_2=M+1$. Therefore, $D_3$ is zero, and $D=\Theta(\max\{1,D_2\})$. 
		
		For $\alpha<1$: 
		\begin{equation}
			D_2=\Theta\left(\frac{(m_2-1)^{3/2-\alpha}}{\sqrt{na(n)}M^{1-\alpha}}\right)=\Theta\left(\sqrt{\frac{M}{na(n)}}\right).
		\end{equation}
		
		For $\alpha=1$:
		\begin{equation}
			D_2=\Theta\left(\frac{(m_2-1)^{1/2}}{\log M\sqrt{na(n)}}\right)=\Theta\left({\frac{\sqrt{M}}{\log M\sqrt{na(n)}}}\right).
		\end{equation}
		
		For $1<\alpha<3/2$: similarly, we have
		\begin{equation}
			D_2=\Theta\left(\frac{M^{3/2-\alpha}}{\sqrt{na(n)}}\right).
		\end{equation}
		
		For $\alpha=3/2$: 
		\begin{equation}
			D_2=\Theta\left(\frac{(\log M)^{3/2}}{\sqrt{na(n)}}\right).
		\end{equation}
		For $\alpha>3/2$:  $D_2=\Theta(\frac{1}{\sqrt{na(n)}})=o(1)$. Also, as shown in the following, $D_3=o(1)$. Therefore, $D=\Theta(D_1)=\Theta(1)$. Now, if $m_2=M+1$ then $D_3=0$. Otherwise, $m_2\simeq\frac{2\alpha-3}{2\alpha}K{n}(a(n))^{1-\frac{3}{2\alpha}}$. Using straightforward calculation, it follows that
		\begin{equation}
			D_3=\Theta({\frac{m_2^{1-\alpha}}{\sqrt{a(n)}}}) = o(1).
		\end{equation}
	\end{proof}
	
	To get more intuition about these results, we can substitute $a(n) = 2\log n/n$, in (\ref{delayadhoc}) and (\ref{throughputadhoc}). We have
	\begin{equation}
	\label{delayadhoc2}
	D^*(n)=
	\begin{cases}
	\Theta(1), & \alpha>3/2\\
	\Theta(\log M), & \alpha=3/2\\
	\Theta (\frac{M^{3/2-\alpha}}{\sqrt{\log M}}), & 1<\alpha<3/2\\
	\Theta (\frac{\sqrt{M}}{(\log M)^{3/2}}), & \alpha=1\\
	\Theta (\sqrt{\frac{M}{\log M}}), & \alpha<1\\
	\end{cases}
	\end{equation}
	\begin{equation}
	\label{throughputadhoc2}
	\lambda^*(n)=
	\begin{cases}
	\Theta(\frac{1}{\log M}), & \alpha>3/2\\
	\Theta(\frac{1}{(\log M)^2}), & \alpha=3/2\\
	\Theta (\frac{M^{\alpha-3/2}}{\sqrt{\log M}}), & 1<\alpha<3/2\\
	\Theta (\sqrt{\frac{\log M}{M}}), & \alpha=1\\
	\Theta (\frac{1}{\sqrt{M\log M}}), & \alpha<1\\
	\end{cases}
	\end{equation}
	
%
%
%
%
	\section{Heterogeneous Wireless Networks}
	\label{hybrid}
	Thus far, we have considered a pure ad hoc wireless network with caching, in which there are no base stations.  We now consider a more general heterogeneous wireless network environment with caching and show that the proposed model for ad hoc networks can be naturally extended to the heterogeneous case.   Consider a heterogeneous wireless network where, in addition to uniformly distributed wireless nodes, there are a number of base stations which are also uniformly distributed at random in the network area.   This models the scenario where smaller cells, e.g. femtocells, are deployed with random placement of base stations inside the network area~\cite{heter}.  The base stations are distinguished from the wireless nodes in that they are assumed to connect to the wired backbone, and thus are assumed to have access to all $M$ content objects.   Let $f(n)$ be the number of base stations, where $f(n)$ is a non-decreasing function of $n$. For our analysis, we assume $f(n)=\Theta(n^{\mu})$, where $0\leq\mu<1$.
	
	We assume that each wireless node is assigned to the closest base station in Euclidean distance. Thus, the network area is divided into $f(n)$ cellular regions.  If the size of each cellular region is large compared to the transmission range $r(n)$ (equivalently $a(n)$) of the wireless nodes, then a wireless node transmits to its assigned base station via multi-hop relaying through other wireless nodes. 
	
	We now consider a transmission and caching scheme for the heterogeneous wireless network, which is similar to the scheme considered for the ad hoc case.  That is, the network area is divided into $a^{-1}(n)$ squared cells each with area $a(n)$.  Based on a TDM scheme, each node, including base stations, transmit packets over the shared channel, subject to the Protocol Model.  For simplicity, we assume all the nodes, including base stations, have the same transmission range, $r(n)$. Note that this is a reasonable assumption when considering femtocells.
	
	Each wireless node can request contents from its assigned base station through multi-hop relaying.  Each wireless node requests content $m$ with probability $p_m$.  If the closest wireless holder of content $m$ is closer to the requesting node than the node's assigned base station, then the content is retrieved from the closest wireless holder. Otherwise, it is retrieved from the base station.
	
	Similar to the previous sections, we assume that the $X_m$ wireless holders of content $m$ are uniformly distributed in the network area. Since we are interested in evaluating the performance of the wireless network, we assume that all requests for content, upon reception at base stations, are satisfied immediately (i.e. a Data Packet is generated immediately).  In other words, we do not consider the delay within the wired backbone network. 
	
	Unlike the pure ad hoc case in which we need to have at least one copy of each content object in the caches of the wireless nodes to satisfy all the requests, for the proposed heterogeneous network we relax this restriction due to the presence of the base stations. As a result, the number of content types can exceed the number of nodes. i.e., $\beta$ can be $\geq 1$.

	As in Lemma \ref{lemma2}, we can show that the average length of the $L_{H,R}(i,m)$ line connecting the requesting node $i$ to the closest cache of content $m$ (either a wireless holder or a base station) is given by:
	\begin{equation}
	E\left[|L_{H,R}(i,m)|\right]=\Theta\left(\frac{1}{\sqrt{X_m+f(n)}}\right).\quad 
	\end{equation}
	
	Consequently, the average of number of hops along the $L_{H,R}$ line is w.h.p. 
	\begin{equation}
	\label{formhopsH}
	E[H_{i,m_i}]
	= \Theta \left(  \max \left\{1,\frac{1}{\sqrt{a(n)(X_m+f(n))}} \right\} \right).
	\end{equation}
	
	Using an approach similar to that in the proof of Lemma \ref{lemma3}, we see that for \smash{$a(n)\geq 2{\log n/n}$}, the number of $L_{H,R}$ lines passing through each cell (of area $a(n)$) is 
	$$\Theta \left( n\sum_{m=1}^{M} p_m \max \left\{ a(n), \sqrt{\frac{a(n)}{X_m+f(n)}}   \right\}\right)  w.h.p.$$ 
	
	Therefore, the throughput and the delay of the achievable scheme for the heterogeneous network model are given by:
	\begin{equation}
	\label{throughputbound_h}
	\lambda(n)= \Theta \left(\frac{1}{n\sum_{m=1}^M  p_m \max{\{a(n),\sqrt{\frac{a(n)}{X_m+f(n)}} \}}} \right) w.h.p. 
	\end{equation}
	
	\begin{equation}
	\label{delayH}
	D=\Theta\left(\sum_{m=1}^{M} p_m \max{\left\lbrace 1,\frac{1}{\sqrt{a(n)(X_m+f(n))}}\right\rbrace}\right)w.h.p. 
	\end{equation}
	Combining the equations (\ref{throughputbound_h}) and (\ref{delayH}), we obtain the same throughput and delay relation as in the ad hoc case given in (\ref{tradeoff}).
	
	Next, we optimize the throughput and delay of the achievable scheme for the heterogeneous network scenario by choosing the appropriate $(X_m)_{m=1}^M$ . Note that here the constraints on $X_m$ are $0\leq X_m \leq a^{-1}(n)-f(n)$, as larger $X_m$'s do not change the order of the throughput or delay.  Thus, the optimization problem is
	\begin{equation}
	\begin{cases}
	\ \min_{\{X_m\}} \sum_{m=1}^M \frac{p_m}{\sqrt{a(n)(X_m+f(n))}} \\
	\ \text{subject to:}\\
	\ \sum_{m=1}^M X_m \leq nK \\
	\ 0 \leq X_m \leq a^{-1}(n)-f(n)  \qquad \text{for } m=1,2,\ldots,M\\
	\end{cases}
	\label{optimizationH}
	\end{equation}
	
	Since the objective function is strictly convex, we are assured that there is a unique global minimum. Defining the non-negative Lagrange multipliers $\lambda$ for the constraint $\sum_{m=1}^{M}X_m\leq nK$, and taking into account the  constraint $0 \leq X_m \leq a^{-1}(n)-f(n)$, the necessary conditions for a minimum of $D$ with respect to $X_m$, $\forall m \in M$ are given
	\begin{equation}
	\frac{\partial D}{\partial X_m}
	\begin{cases}
	\ \leq -\lambda \qquad \text{if } X_m=a^{-1}(n)-f(n) \\
	\ = -\lambda \qquad \text{if } 0<X_m<a^{-1}(n)-f(n)\\
	\ \geq -\lambda \qquad \text{if } X_m=0 \\
	
	\end{cases}
	\label{necessaryH1}
	\end{equation}
	
	Given the Zipf distribution, let $\mathcal{M}_1=\{1,2,\ldots,m_1-1\}$ be the set of content objects such that $X_m=a^{-1}(n)-f(n)$ for $m\in\mathcal{M}_1$. Similarly, let $\mathcal{M}_2=\{m_1,m_1+1,\ldots,m_2-1\}$ and $\mathcal{M}_3=\{m_2,m_2+1,\ldots,M\}$ be the set of contents such that $0<X_m<a^{-1}(n)-f(n)$ for $m\in\mathcal{M}_2$, and $X_m=0$ for $m\in\mathcal{M}_3$, respectively. From (\ref{necessaryH1}), we have $\forall m \in M$
	\begin{equation}
	\frac{p_m}{2\sqrt{a(n)(X_m+f(n))^3}}
	\begin{cases}
	\ \geq \lambda \qquad  \forall m\in\mathcal{M}_1\\
	\ = \lambda \qquad \forall m\in\mathcal{M}_2\\
	\ \leq \lambda \qquad \forall m\in\mathcal{M}_3 \\
	
	\end{cases}
	\label{necessaryH2}
	\end{equation}

	Using the equality for the case $\forall m\in\mathcal{M}_2$, we obtain
	\begin{equation}
	\label{m1/m2h}
	\frac{m_1}{m_2}\simeq (a(n)f(n))^{\frac{3}{2\alpha}}.
	\end{equation}
	
	From (\ref{necessaryH2}), we have $\lambda>0$ and hence, $\sum_{m=1}^{M}X_m=nK$. Combining this with (\ref{m1/m2h}), we can derive $m_1$ and $m_2$. The optimal number of holders of content $m$, $X_m^*$, is then given by
	
	\begin{equation}
	\label{Xmh}
	X_{m}^*=
	\begin{cases}
	\ a^{-1}(n)-f(n),\qquad\qquad  m=1,2,\ldots, m_1-1\\
	\ \frac{p_m^{2/3}}{\sum_{j=m_1}^{m_2-1}p_j^{2/3}} nK'-f(n), m=m_1,\ldots,m_2-1\\
	\ \quad 0, \qquad\qquad\qquad\qquad m=m_2,\ldots,M
	\end{cases}
	\end{equation}
	where $K'\triangleq K-(m_1-1)\frac{a^{-1}(n)}{n}+(m_2-1)\frac{f(n)}{n}$. Hence, the average delay is w.h.p.
	\begin{equation}
	D^*(n)=\Theta \left(\sum_{j=1}^{m_1-1}p_j+\frac{\left(\sum_{j=m_1}^{m_2-1}p_j^{2/3}\right)^{3/2}}{\sqrt{a(n)nK'}}+\frac{\sum_{j=m_2}^{M}p_j}{\sqrt{f(n)a(n)}}\right).
	\label{Dh}
	\end{equation}
	
	We can now apply techniques similar to the one used in the ad hoc case in order to estimate the indices $m_1$ and $m_2$, and then compute the scalings of the delay and throughput. So far we have considered $a(n) \geq 2 \log n/n$ to be a general parameter resulting in a trade-off between the throughput and delay of the network: as $a(n)$ increases (decreases), both throughput and delay of the network decrease (increase). In this section, we consider a single point of this trade-off where $a(n) = 2 \log n/n$, as this will give us more intuitive formulas for delay and throughput. The generalization of this result is a straightforward calculation following the approach of the ad hoc case.  Following this, we can estimate the indices $m_1$ and $m_2$ as follows.
	\begin{lemma}
		\label{m1m2h}
		Taking $n\rightarrow\infty$, $m_1$ and $m_2$ scales as:
		\begin{equation}
		\label{m1h}
		m_1=
		\begin{cases}
		\ \Theta(\log n) & \alpha>3/2\\
		\ \Theta(1) & \alpha=3/2\\
		\ \text{converging to }1 & \alpha<3/2
		\end{cases}
		\end{equation}
		\begin{equation}
		\label{m2h}
		m_2=
		\begin{cases}
		\ \min\{M+1,\Theta\left((\frac{n}{f(n)})^{\frac{3}{2\alpha}}(\log n)^{1-\frac{3}{2\alpha}}\right)\} & \alpha>3/2\\
		\ \min\{M+1,\Theta\left(\frac{n}{f(n)\log n}\right)\} & \alpha=3/2\\
		\ \min\{M+1,\Theta\left(\frac{n}{f(n)}\right)\} & \alpha<3/2
		\end{cases}
		\end{equation}
	\end{lemma}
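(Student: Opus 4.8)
The plan is to follow the method of Lemma~\ref{m1m2lemma}, substituting the heterogeneous expressions \eqref{Xmh} and \eqref{m1/m2h} for their ad hoc analogues. First I would convert the defining (in)equalities for $m_1$ and $m_2$ into approximate fixed-point equations. Using $\sum_{j=m_1}^{m_2-1}p_j^{2/3}=H_\alpha(M)^{-2/3}\bigl(H_{2\alpha/3}(m_2-1)-H_{2\alpha/3}(m_1-1)\bigr)$ together with $X_m^*$ from \eqref{Xmh}, the condition $X_{m_1}^*<a^{-1}(n)-f(n)$ and the companion one-step bound coming from $X_{m_1-1}^*=a^{-1}(n)-f(n)$ give, for $m_1>1$,
\[
nK'a(n)\ \simeq\ m_1^{2\alpha/3}\,\bigl[H_{2\alpha/3}(m_2-1)-H_{2\alpha/3}(m_1-1)\bigr],
\]
while $X_{m_2-1}^*>0$ with its companion perturbation bound (valid when $m_2\le M$) gives
\[
\frac{nK'}{f(n)}\ \simeq\ (m_2-1)^{2\alpha/3}\,\bigl[H_{2\alpha/3}(m_2-1)-H_{2\alpha/3}(m_1-1)\bigr],
\]
whose ratio reproduces \eqref{m1/m2h}. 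Along the way I would note that $K'=\Theta(1)$: since $M=o(n)$ and $m_2\le M+1$, the term $(m_2-1)f(n)/n$ in $K'$ is $O(1)$, and solving the $m_1$-equation self-consistently shows $(m_1-1)a^{-1}(n)/n=O(1)$ as well.

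Then, substituting $a(n)=2\log n/n$ (so $na(n)=2\log n$ and $a(n)f(n)=\Theta(n^{\mu-1}\log n)\to 0$, which forces $m_2\gg m_1$), I would split on the three regimes of \eqref{H} for the exponent $2\alpha/3$. For $\alpha>3/2$, $H_{2\alpha/3}$ converges, so the bracket above is $\Theta(m_1^{1-2\alpha/3})$; the first equation collapses to $m_1\simeq(2\alpha/3-1)nK'a(n)=\Theta(\log n)$, and \eqref{m1/m2h} then yields $m_2=\Theta\!\bigl((n/f(n))^{3/2\alpha}(\log n)^{1-3/2\alpha}\bigr)$, capped at $M+1$. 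For $\alpha=3/2$ the bracket is $\Theta(\log(m_2/m_1))=\Theta(\log n)$, so the second equation gives $m_2=\Theta(n/(f(n)\log n))$ (capped at $M+1$) and the first gives $m_1=\Theta(1)$. For $\alpha<3/2$ the bracket is $\Theta(m_2^{1-2\alpha/3})$, so the second equation gives $m_2=\Theta(n/f(n))$ (capped at $M+1$), and inserting this into the first equation shows the interior value of $m_1$ would be $o(1)$; hence the lower box constraint is active and $m_1$ converges to $1$, i.e.\ $\mathcal{M}_1=\emptyset$ for $n$ large. In every case one verifies that the asserted truncation at $M+1$ is the active one and that $m_2\gg m_1$, as assumed.

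I expect the main obstacle to be precisely this bootstrapping among $m_1$, $m_2$, $K'$ and the two truncations: one must check self-consistency of the fixed-point relations (that $K'$ stays $\Theta(1)$, that $m_2$ really dominates $m_1$, and that the active truncation is the claimed one), and in the regime $\alpha<3/2$ one must argue separately that the unconstrained optimizer is infeasible so that the boundary $m_1=1$ is forced — the same delicate points handled in Lemma~\ref{m1m2lemma}, now carrying the additional additive $f(n)$ terms. The remaining work is the routine substitution of \eqref{H} into the fixed-point equations.
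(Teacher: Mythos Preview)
Your proposal is correct and follows essentially the same route as the paper: derive the two approximate fixed-point relations from the one-step perturbation of the boundary indices in \eqref{Xmh}, take their ratio to recover \eqref{m1/m2h}, substitute $a(n)=2\log n/n$, and then split on whether $2\alpha/3$ is greater than, equal to, or less than $1$ via \eqref{H}. Your handling of the $\alpha<3/2$ case (showing the unconstrained $m_1$ would be $o(1)$, forcing $m_1\to 1$) and your explicit attention to the self-consistency of $K'=\Theta(1)$ and of the truncations are exactly the points the paper works through, so there is nothing to add.
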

	\begin{proof}
		Refer to Appendix \ref{app_m1m2h}.
		\end{proof}
	
	We now compute the throughput and delay of the proposed heterogeneous network model as follows. Note that part \ref{them:1} of Theorem \ref{hybridtheorem}, considers the case where $m_2=M+1$. For $\alpha\leq3/2$ this happens when $\beta< 1-\mu$, or equivalently $f(n)=o(\frac{n}{M})$ and $f(n)\geq1$. For $\alpha>3/2$, $m_2=M+1$ if $\beta\leq\frac{3}{2\alpha}(1-\mu)$, or equivalently $f(n)=O(\frac{n}{M^{2\alpha/3}})$ and $f(n)\geq1$. 
	On the other hand, part \ref{them:2} of Theorem \ref{hybridtheorem} shows the performance of the network when $m_2\leq M$. For $\alpha\leq3/2$ this happens when $\beta\geq 1-\mu$, or equivalently $f(n)=\Omega(\frac{n}{M})$ and $f(n)\geq1$. In addition, for $\alpha>3/2$, $m_2\leq M$ if $\beta>\frac{3}{2\alpha}(1-\mu)$, or equivalently $f(n)=\omega(\frac{n}{M^{2\alpha/3}})$ and $f(n)\geq1$. Note that for any value of $\alpha$, if $f(n)=\Omega(\frac{n}{M})$ and $f(n)\geq1$ (or equivalently $\mu\geq \max\{0,1-\beta\}$), then the heterogeneous network performance follows (\ref{hybriddelay}) and (\ref{throughputhybrid}).
	\begin{theorem}
		\label{hybridtheorem}
		For $a(n) = 2 \log n/n$,
		\begin{enumerate}
			\item \label{them:1}The throughput and delay performance of the achievable scheme for the heterogeneous network, when $m_2=M+1$ and the content popularity distribution follows the Zipf distribution, is the same as given in (\ref{throughputadhoc}) and (\ref{delayadhoc}), respectively.
			\item  \label{them:2}
			The throughput and delay of the achievable scheme, when $m_2\leq M$, are w.h.p.:
			\begin{equation}
			\label{hybriddelay}
			D^*(n)=
			\begin{cases}
			\Theta(1) & \alpha>3/2\\
			\Theta\left(\log n\right) & \alpha=3/2\\
			\Theta \left(\frac{ (\frac{n}{f(n)})^{3/2-\alpha}}{\sqrt{\log n}}\right) & 1<\alpha<3/2\\
			\Theta \left(\sqrt{\frac{n}{f(n)\log n}}\right) & \alpha\leq 1
			
			\end{cases}
			\end{equation}
			\begin{equation}
			\label{throughputhybrid}
			\lambda^*(n)=
			\begin{cases}
			\Theta(\frac{1}{\log n}) & \alpha>3/2\\
			\Theta\left(\frac{1}{(\log n)^2}\right) & \alpha=3/2\\
			\Theta \left(\frac{1}{\sqrt{\log n} (\frac{n}{f(n)})^{3/2-\alpha}}\right) & 1<\alpha<3/2\\
			\Theta \left(\sqrt{\frac{f(n)}{n\log n}}\right) & \alpha\leq 1\\
			
			\end{cases}
			\end{equation}
		\end{enumerate}
	\end{theorem}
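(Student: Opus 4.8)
Part~\ref{them:1} is essentially immediate. When $m_2 = M+1$ the set $\mathcal{M}_3$ is empty and, by the hypotheses identified just before the theorem, $f(n)$ is then small enough ($o(n/M)$, or $O(n/M^{2\alpha/3})$ when $\alpha>3/2$) that the optimal allocation \eqref{Xmh} has $X_m^*+f(n)$ equal, to leading order, to the ad hoc optimal multiplicity; the delay \eqref{Dh} then collapses to \eqref{D}, and the claim follows from Theorem~\ref{theorem5}. The work is in part~\ref{them:2}, where I would proceed exactly as in the ad hoc analysis of Theorem~\ref{theorem5}: plug $a(n)=2\log n/n$ and the index scalings of Lemma~\ref{m1m2h} into \eqref{Dh}, estimate the three Zipf partial sums that appear, find the dominant term in each range of $\alpha$, and read off the throughput from the tradeoff \eqref{tradeoff} (which also holds for the heterogeneous model).

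First I would record the reductions. With $a(n)=2\log n/n$ we have $na(n)=2\log n$; the constant $K'=K-(m_1-1)\frac{a^{-1}(n)}{n}+(m_2-1)\frac{f(n)}{n}$ is $\Theta(1)$ (using $\mu<1$ and Lemma~\ref{m1m2h}: the first correction is $\Theta(1)$ and bounded away from $K$ for $\alpha>3/2$, and $o(1)$ otherwise, exactly as in the proof of Lemma~\ref{m1m2lemma}, while the second is $O(1)$), so $\sqrt{a(n)nK'}=\Theta(\sqrt{\log n})$ and $\sqrt{a(n)f(n)}=\Theta(\sqrt{f(n)\log n/n})$. Since we are in the branch $m_2\le M$, Lemma~\ref{m1m2h} pins $m_2$ to its non-truncated value, and in the interior of that regime $m_2=o(M)$, so the tail $\sum_{j=m_2}^M p_j$ is controlled by its lower endpoint.

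Then I would estimate the three sums in \eqref{Dh} using \eqref{H} and the comparison $\sum_{j=a}^{b}j^{-\gamma}=\Theta(a^{1-\gamma})$ for $\gamma>1$, $\Theta(\log(b/a))$ for $\gamma=1$, $\Theta(b^{1-\gamma})$ for $\gamma<1$. The head term $\sum_{j=1}^{m_1-1}p_j$ is $\Theta(1)$ for $\alpha>1$ (as $m_1\to\infty$) and $o(1)$ for $\alpha\le1$ (as $m_1\to1$, making $\mathcal{M}_1$ empty). For the middle term, $p_j^{2/3}=\Theta(j^{-2\alpha/3}/H_\alpha(M)^{2/3})$, so the $2/3$-power sum is head-dominated when $2\alpha/3>1$, giving $\big(\sum_{j=m_1}^{m_2-1}p_j^{2/3}\big)^{3/2}=\Theta(m_1^{3/2-\alpha})$ for $\alpha>3/2$; tail-dominated for $\alpha<3/2$, giving $\Theta(m_2^{3/2-\alpha}/H_\alpha(M))$; and $\Theta((\log m_2)^{3/2})$ at $\alpha=3/2$. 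The base-station term has $\sum_{j=m_2}^M p_j=\Theta(m_2^{1-\alpha}/H_\alpha(M))$ for $\alpha>1$ and $\Theta(1)$ for $\alpha\le1$. Substituting the $m_1,m_2$ scalings turns each term into an explicit function of $n,f(n)$, and comparing powers of $n/f(n)$ gives: for $\alpha>3/2$ the head term dominates, $D^*(n)=\Theta(1)$ (the middle term is $\Theta((\log n)^{1-\alpha})=o(1)$, the base-station term is $o(1)$ because $m_2^{\alpha-1}$ outgrows $\sqrt{n/f(n)}$ by the exponent gap $(2\alpha-3)/(2\alpha)>0$); for $\alpha=3/2$ the middle term dominates, $D^*(n)=\Theta(\log n)$; for $1<\alpha<3/2$ the middle and base-station terms agree at order $(n/f(n))^{3/2-\alpha}/\sqrt{\log n}$; for $\alpha\le1$ the base-station term dominates, $D^*(n)=\Theta(\sqrt{n/(f(n)\log n)})$ (the middle term is smaller by $\Theta((n/(f(n)M))^{1-\alpha})=O(1)$ when $\alpha<1$, and by $\Theta(\log M)$ when $\alpha=1$). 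This is \eqref{hybriddelay}; then \eqref{tradeoff} gives $\lambda^*(n)=\Theta(1/(\log n\,D^*(n)))$, which is \eqref{throughputhybrid} case by case.

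The only real difficulty is the partial-sum bookkeeping: across the five ranges of $\alpha$ one must decide correctly whether each Zipf partial sum is head- or tail-dominated (the switch happens at $\gamma=1$, i.e. at $\alpha=3/2$ for the $2/3$-power sums and at $\alpha=1$ for the ordinary sums), carry the $H_\alpha(M)$ normalizers through correctly, and confirm the dominant term by comparing exponents of $n/f(n)$. One should also keep in mind the degenerate subcases ($\mathcal{M}_1$ empty when $m_1\to1$; the boundary $m_2=\Theta(M)$, where the tail estimate changes), but these do not affect the order-sense statement in the interior of the $m_2\le M$ regime. Everything else runs parallel to the computation behind Theorem~\ref{theorem5}.
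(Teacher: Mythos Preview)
Your approach is essentially the same as the paper's: decompose \eqref{Dh} into head, middle, and base-station partial sums, substitute the Zipf asymptotics and the $m_1,m_2$ scalings from Lemma~\ref{m1m2h}, identify the dominant term in each range of $\alpha$, and recover the throughput from the tradeoff \eqref{tradeoff}. The case analysis and the resulting orders all match.

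One small slip worth fixing: you write that the head term $\sum_{j=1}^{m_1-1}p_j$ is $\Theta(1)$ ``for $\alpha>1$ (as $m_1\to\infty$)''. But by Lemma~\ref{m1m2h}, $m_1\to\infty$ only when $\alpha>3/2$; for $\alpha=3/2$ it is $\Theta(1)$ and for $\alpha<3/2$ it converges to $1$, so the head term is $o(1)$ throughout $\alpha\le 3/2$ (this is exactly what the paper records). Your conclusions are unaffected because in those ranges the middle or base-station term grows and dominates any $O(1)$ contribution, but the parenthetical justification is wrong as stated.
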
 
	\begin{proof}
We compute the average delay. The average throughput follows by (\ref{tradeoff}). Substituting for the $p_j$'s in equation (\ref{Dh}) using the Zipf distribution, we have
\begin{equation}
	\label{Dzh}
	\begin{split}
		D=\frac{H_{\alpha}(m_1)}{H_{\alpha}(M)}+\frac{[H_{\frac{2\alpha}{3}}(m_2-1)-H_{\frac{2\alpha}{3}}(m_1-1)]^{3/2}}{\sqrt{K'\log n}H_{\alpha}(M)}\\
		+
		\sqrt{\frac{n}{f(n)\log n}}\cdot\frac{H_{\alpha}(M)-H_{\alpha}(m_2-1)}{H_{\alpha}(M)}.
	\end{split}
\end{equation}
where $K'\rightarrow K-\frac{(m_1-1)}{\log n}$ as $n\rightarrow \infty$. Similar to the proof of Theorem \ref{theorem5}, let the three expressions on the RHS of~\eqref{Dzh} be denoted by $D_1$, $D_2$, and $D_3$, respectively.
Moreover, when $m_2=M+1$, $D_3=0$. Hence, the equation (\ref{Dzh}) is simplified to equation (\ref{Dz}), given that $m_2=M+1$. As shown in (\ref{m2}), this always holds for $\alpha\leq 3/2$. In addition, for $\alpha>3/2$, if we assign $m_2=M+1$, we still get the same result as shown in the proof of Theorem \ref{theorem5}. 

Now we prove the results for the second part of the theorem, where $m_2\leq M$.
By Lemma \ref{m1m2h}, we know for $\alpha\leq 3/2$, $K'\rightarrow K$ and $D_1=o(1)$. For $\alpha>3/2$, $D_1=\Theta(1)$. 

For $\alpha<1$: by (\ref{m2h}), $m_2=\Theta(\frac{n}{f(n)})$. Following (\ref{Dzh}),
\begin{equation}
	D_2\simeq \frac{n^{(1-\mu)(3/2-\alpha)}}{\sqrt{\log n}M^{1-\alpha}}.
\end{equation}
\begin{equation}
	\label{D3a1}
	D_3\simeq\frac{n^{\frac{1-\mu}{2}}}{\sqrt{\log n}}.
\end{equation}

It can easily be shown that $D_2=o(D_3)$. Thus, $D=\Theta(D_3)$.

For $\alpha=1$: Similarly, by using (\ref{Dzh}), it follows that $D_3$ is given by (\ref{D3a1}). For $D_2$ we have
\begin{equation}
	D_2\simeq \frac{n^{(1-\mu)(3/2-\alpha)}}{\sqrt{\log n}\log M}.
\end{equation}
Now since $\log M=\Theta(\log n)$, we have $D_2=\Theta\left(\frac{n^{\frac{1-\mu}{2}}}{(\log n)^{3/2}}\right)$. Clearly, $D_2=o(D_3)$. Hence, $D=\Theta(D_3)$.

For $1<\alpha<3/2$: By using the same technique as in the previous part, we can see that $D_3=\Theta(D_2)$ and therefore, $D=\Theta(D_2)$. we have
\begin{IEEEeqnarray}{+rCl+x*}
	D_2&\simeq& \frac{m_2^{3/2-\alpha}}{\sqrt{\log n}} 
	= \Theta\left(\frac{n^{(1-\mu)(3/2-\alpha)}}{\sqrt{\log n}}\right).
\end{IEEEeqnarray} 
\begin{IEEEeqnarray}{+rCl+x*}
	D_3&\simeq& \frac{m_2^{1-\alpha}\cdot n^{\frac{1-\mu}{2}}}{\sqrt{\log n}}   
	= \Theta\left(\frac{n^{(1-\mu)(3/2-\alpha)}}{\sqrt{\log n}}\right).
\end{IEEEeqnarray} 

For $\alpha=3/2$: using Lemma \ref{m1m2h}, it follows from (\ref{Dzh}) that
\begin{equation}
	D_2\simeq\frac{\left(\log \frac{n^{1-\mu}}{\log n}\right)^{3/2}}{\sqrt{\log n}}=\Theta(\log n).
\end{equation}
\begin{equation}
	D_3\simeq \frac{m_2^{-1/2}\cdot n^{\frac{1-\mu}{2}}}{\sqrt{\log n}}
	=\Theta(1).
\end{equation}   
Therefore, $D=\Theta(D_2)$.

For $\alpha>3/2$:  using a similar calculation, we have
\begin{equation*}
	D_2\simeq\frac{m_1^{3/2-\alpha}}{\sqrt{\log n}}=o(1).
\end{equation*}
and,
\begin{equation}
	\label{D3a32}
	D_3\simeq\frac{m_2^{1-\alpha}n^{\frac{1-\mu}{2}}}{\sqrt{\log n}}=o(1).
\end{equation}
To show the last equation in (\ref{D3a32}), let's consider the power of $n$ in $D_3$: $\frac{3}{2\alpha}(1-\mu)(1-\alpha)+\frac{1-\mu}{2}=(1-\mu)(\frac{3}{2\alpha}-1)<0$. Hence, 
$D_3\rightarrow 0$ as $n\rightarrow \infty$. Therefore, $D=\Theta(D_1)=\Theta(1)$.

	\end{proof}
	Comparing the results for the heterogeneous network in Theorem \ref{hybridtheorem} with those for the pure ad hoc network given in Theorem \ref{theorem5}, for $ \alpha \neq 1$ and  $a(n) = \Theta(\frac{\log n}{n})$, we conclude that 
	the number of base stations in the network needs to be greater than $\frac{n}{M}=n^{1-\beta}$ to improve the order of the performance metrics (throughput and delay). For the scenario where $\beta\geq 1$, this condition reduces to $f(n)\geq 1$. In other words, if $\beta\geq 1$, the heterogeneous network always outperforms the pure ad hoc network. Also, note that for $\alpha\geq 3/2$, the performance of the heterogeneous network is the same as that for the pure ad hoc case. Intuitively, this is because for large $\alpha$'s, the majority of content requests are for the most popular content objects, hence, caching the most popular content objects will almost eliminate the need for base stations.
	\begin{figure*}

		\begin{subfigure}[t]{0.5\textwidth}
				 	 		  	  	\centering
			\includegraphics[width=.75\textwidth]{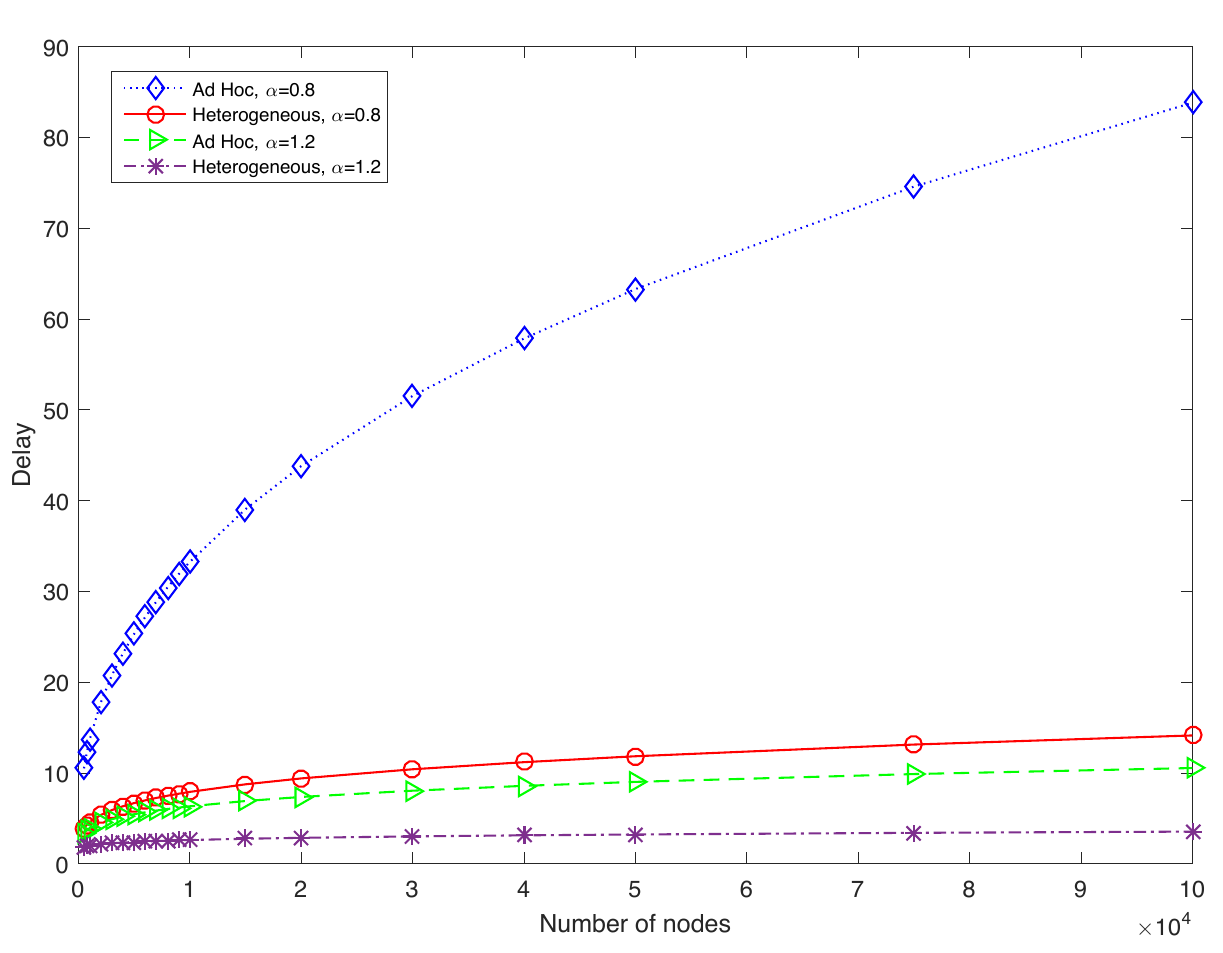}
			\subcaption{The scaling of delay for the heterogeneous and ad hoc network models\\ for various values of $\alpha$, vs. number of nodes, for $\beta =0.9$, and for $\mu =0.4$.}
			\label{delayheterfig}
		\end{subfigure}
		\begin{subfigure}[t]{0.5\textwidth}
				 	 		  	  	\centering
			\includegraphics[width=.75\textwidth]{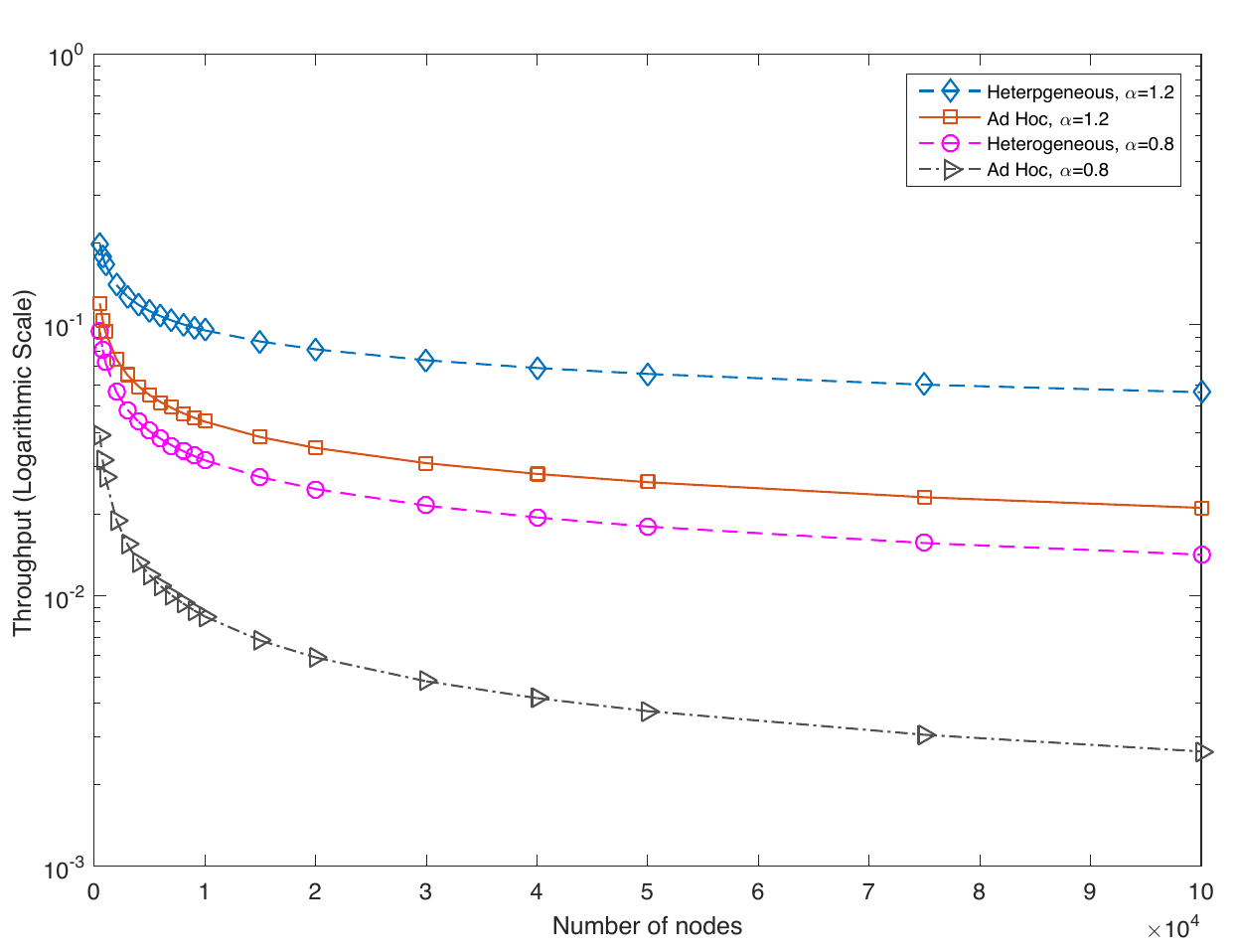}
			\subcaption{The logarithmic scaling of per-node throughput for the heterogeneous and ad hoc network models for various values of $\alpha$, vs. number of nodes, for $\beta =0.9$, and, for $\mu =0.4$.}
			\label{throughputheterfig}
		\end{subfigure}
			\begin{subfigure}[t]{0.5\textwidth}
					 	 		  	  	\centering
				\includegraphics[width=.75\textwidth]{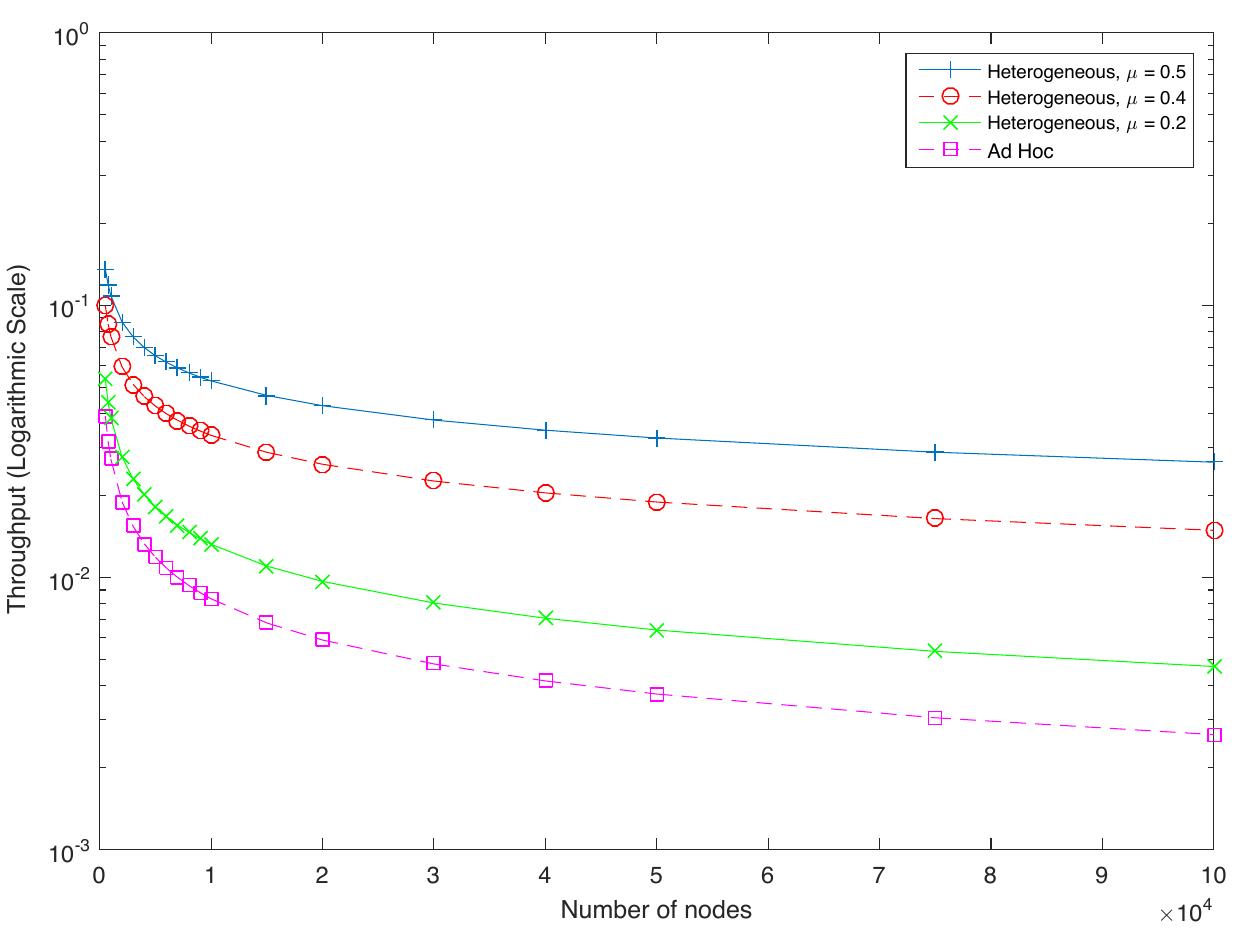}
				\subcaption{The logarithmic scaling of per-node throughput for the heterogeneous\\ and ad hoc network models for various values of $\mu$, vs. number of \\nodes, for $\beta =0.9$, and, for $\alpha =0.8$.}
				\label{throughputmuheterfig}
			\end{subfigure}
			\begin{subfigure}[t]{0.5\textwidth}
					 	 		  	  	\centering
			\includegraphics[width=.75\textwidth]{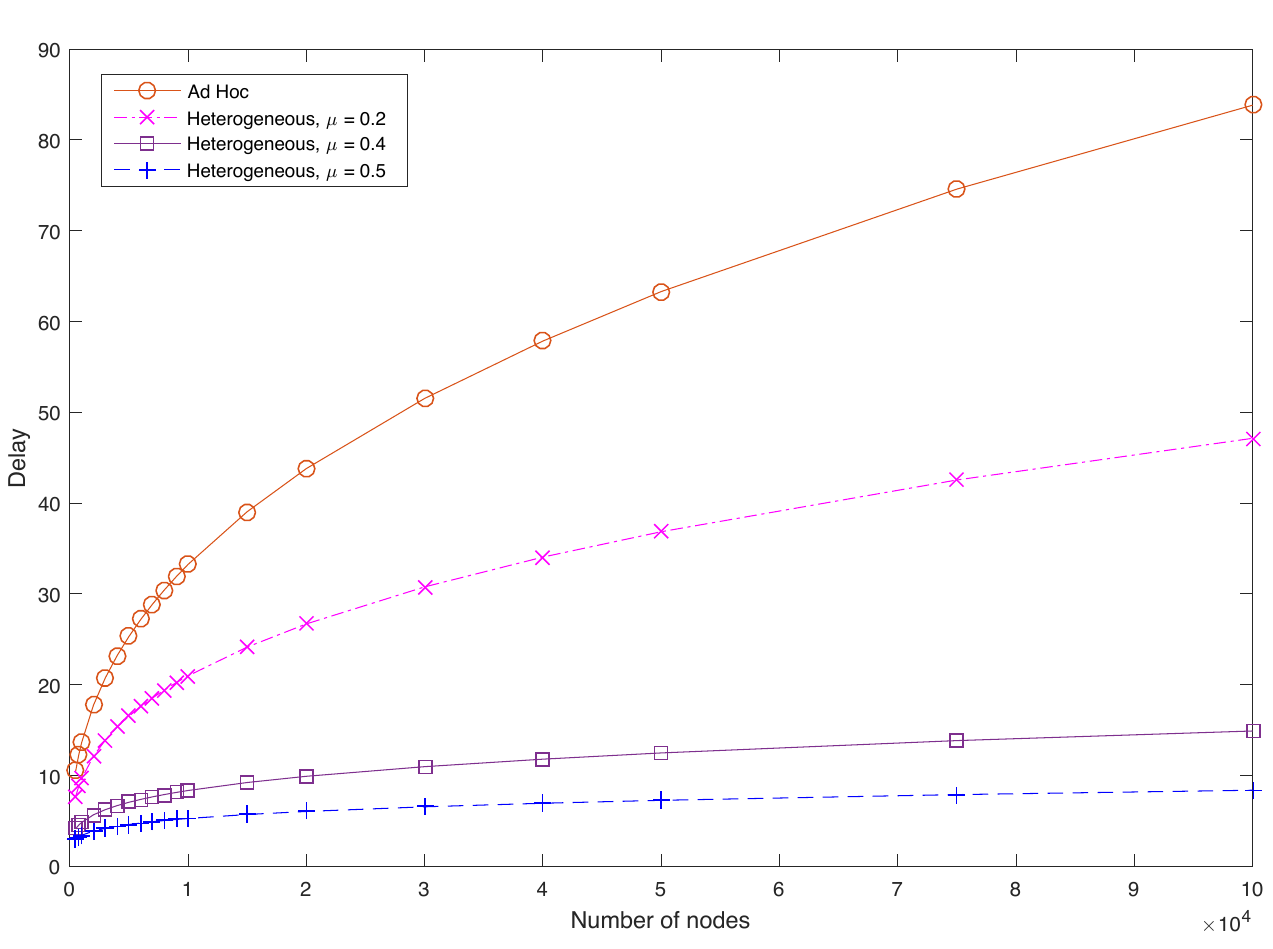}
			\subcaption{The scaling of delay for the heterogeneous and ad hoc network models for various values of $\mu$, vs. number of nodes, for $\beta =0.9$, and, for $\alpha =0.8$.}
			\label{delaymuheterfig}
		\end{subfigure}
	\end{figure*}

We have plotted the theoretical results given in (\ref{hybriddelay}) and (\ref{throughputhybrid}) in Figures \ref{delayheterfig} and \ref{throughputheterfig} , respectively, to demonstrate the scaling of the network delay and per-node throughput for $\alpha=0.8$ and $\alpha=1.2$. The constants are normalized to focus on the scaling of the curves.
In addition, we have plotted the performance of the ad hoc network model for the same values of $\alpha$. In both figures, $\beta=0.9$ and $f(n)=n^{0.4}$.  Note that for $\alpha\geq3/2$, the performance of the heterogeneous network is the same in order as that for the ad hoc case. In Figures \ref{throughputmuheterfig} and \ref{delaymuheterfig}, the scaling of the per-node throughput and network delay is shown for $\alpha = 0.8$, $\beta =0.9$, and various values of $\mu$, along with the corresponding scaling for the pure ad hoc case. As predicted, by adding more base stations to the network, the performance of the network, both in terms of throughput and delay, is improved. 

\section{Conclusions}
We have investigated the asymptotic behavior of wireless caching networks.   We presented an achievable caching and transmission scheme whereby requesters retrieve content from the holder which is closest in Euclidean distance.  We established the throughput and delay scaling of the achievable caching/transmission scheme, and showed that the throughput and delay performance are order-optimal within a class of schemes.   We then optimized the caching strategy to simultaneously minimize the average network delay and maximize the network throughput.  Using the optimal caching strategy, we evaluated the network performance under a Zipf content popularity distribution. 

Furthermore, we investigated heterogeneous wireless networks where, in addition to wireless nodes, there are a number of base stations uniformly distributed at random in the network area.  We showed that in order to achieve a better performance in a heterogeneous network in the order sense, the number of base stations needs to be greater than the ratio of the number of nodes to the number of content types. For the case where the number of content objects is greater than the number of wireless nodes, this condition reduces to having at least one base station in the network. In addition, we demonstrated that for the Zipf content popularity distribution with exponent $\alpha\geq 3/2$, the performance of the wireless ad hoc network is of the same order as for the heterogeneous wireless network, independent of number of base stations.
\appendix
\subsection{Proof of Lemma \ref{lemma2}}
\label{a}

Since the holders are independently and uniformly distributed, the probability that no holder is within distance less than or equal to $\tau$ of the requester is $\Pr(d \geq \tau)=(1-\pi\tau^2)^{X_m}$ for $0\leq \tau \leq 1/\sqrt{\pi}$. Therefore, the average distance from the requester to the closest holder is

\begin{IEEEeqnarray*}{rCl}
	E[d] & = & \int_{0}^{\infty}\Pr(d \geq \tau) \mathrm{d}\tau \\
	& = & \int_{0}^{\frac{1}{\sqrt{\pi}}}(1-\pi\tau^2)^{X_m} \mathrm{d}\tau.
\end{IEEEeqnarray*}
Using a change of variable $\sqrt{\pi}\tau = \cos \theta$ and applying integration by parts, we have
\begin{IEEEeqnarray}{+rCl+x*}
	E[d] & = & \frac{1}{\sqrt{\pi}}\int_{0}^{\frac{\pi}{2}}(\sin \theta )^{2X_m+1}  \mathrm{d}\theta \nonumber\\
	& = & \frac{1}{\sqrt{\pi}}\frac{2X_m}{2X_m+1}\cdot \frac{2X_m-2}{2X_m-1} \cdot \ldots \frac{2}{3} \cdot \int_{0}^{\frac{\pi}{2}} \sin \theta \mathrm{d} \theta \label{formula6}\IEEEeqnarraynumspace \text{    } \\
	& = &\frac{1}{\sqrt{\pi}}\frac{2X_m}{2X_m+1}\cdot \frac{2X_m-2}{2X_m-1} \cdot \ldots \frac{2}{3}\\
	& = & \Theta(\frac{1}{\sqrt{X_m}}). \label{result}
\end{IEEEeqnarray}
where (\ref{formula6}) is derived from
\begin{equation}
\int \sin^n x \mathrm{d} x= - \frac{1}{n} \sin^{n-1} x \cos x + \frac{n-1}{n} \int \sin^{n-2} x \mathrm{d} x.
\end{equation}
(\ref{result}) is followed from the fact that 
\begin{equation}
\frac{n_2}{n_1+1} \leq \left(\frac{g(n_1)}{g(n_2)}\right)^2 \leq \frac{n_2+1}{n_1},
\end{equation}
where 
\begin{equation}
g(n)=\frac{n-1}{n}\cdot \frac{n-3}{n-2} \cdot \ldots \frac{2}{3},
\end{equation}
and $n_1$ and $n_2$ are  two arbitrary odd integers. Therefore, $g(2X_m+1)=\Theta(1/\sqrt{X_m})$.

\subsection{Proof of Lemma \ref{lemma_hops}}
\label{app_hops}
{We compute the result for $E[H_{i,m}]$.  The same argument may be used to find $E[H'_{i,m}]$.
}
\color{black}
To compute $E[H_{i,m}]$, we consider the case where the holder is within one hop of the requester, and the case where the holder is farther than one hop away.  We have
\begin{IEEEeqnarray*}{+l+x*}
	E[H_{i,m}]=\\
	E[H_{i,m}||L_{H,R}(i,m)|\leq \sqrt{a(n)}]\Pr(|L_{H,R}(i,m)|\leq \sqrt{a(n)})\\
	+ E[H_{i,m}||L_{H,R}(i,m)|> \sqrt{a(n)}]\Pr(|L_{H,R}(i,m)|> \sqrt{a(n)}).
\end{IEEEeqnarray*}

Clearly, $E[H_{i,m}||L_{H,R}(i,m)|\leq \sqrt{a(n)}]=1$. Also, since the side-length of each cell is $\sqrt{a(n)}$, it can be shown that $E[H_{i,m}||L_{H,R}(i,m)|> \sqrt{a(n)}]=\Theta({E[|L_{H,R}(i,m)|]}/{\sqrt{a(n)}})=\Theta(1/\sqrt{a(n)X_{m}})$. 

Letting $\alpha(n) \equiv \Pr(|L_{H,R}(i,m)|>\sqrt{a(n)} )$, it follows that
\begin{equation}
\label{Ehops}
E[H_{i,m}]=\Theta\left(1+\left[\frac{1}{\sqrt{a(n)X_m}}-1\right]\alpha(n)\right).
\end{equation}
Note that $\alpha(n)=\Pr(d>\sqrt{a(n)})=(1-\pi a(n))^{X_m}$.  Expanding $\alpha(n)$ using the binomial form, and noting that $\left(\frac{n}{k}\right)^k\leq \binom{n}{k}\leq \frac{n^k}{k!}$, for $n\geq k \geq 1$, we have
\begin{equation}
\label{hopbounds}
1+\sum_{i=1}^{X_m}(-1)^i\frac{(\pi a(n)X_m)^i}{i^i}\leq \alpha(n)\leq e^{-\pi a(n)X_m}.
\end{equation}
Now, as $n\rightarrow \infty$, for $X_m=\omega(1/a(n))$, $e^{-\pi a(n)X_m}\rightarrow 0$, and hence $\alpha(n)\rightarrow 0$, implying that $E[H_{i,m}]=1$. For $X_m=\Theta(1/a(n))$, both bounds in (\ref{hopbounds}), and consequently $\alpha(n)$, are constant, leading to $E[H_{i,m}]=\Theta(1)$.  On the other hand, for $X_m=o(1/a(n))$, $a(n)X_m\rightarrow 0$, resulting in both bounds in (\ref{hopbounds}) converging to 1, as $n\rightarrow \infty$. Substituting $\alpha(n)=1$ in (\ref{Ehops}) gives $E[H_{i,m}]=\Theta(\frac{1}{\sqrt{a(n)X_m}})$. Therefore, the average number of hops can be re-written as 
\begin{equation}
E[H_{i,m}]= \Theta \left(\max{\left\lbrace\frac{1}{\sqrt{a(n)X_{m}}},1\right\rbrace}\right) \; w.h.p. 
\end{equation}

	\subsection{Proof of Lemma \ref{m1m2lemma}}
	\label{app_m1m2lemma}
	As $M=o(n)$, then $ M-m_2=o(n)$. Therefore, $K'\rightarrow K-{(m_1-1)}\frac{a^{-1}(n)}{n}$ as $n\rightarrow \infty$. Clearly, $K' = \Theta (1)$, hence, $m_1 = O(na(n))$. Now, by definition, $m_1$ is the smallest index for which the number of holders is less than $a^{-1}(n)$. That is, $X_{m_1}<a^{-1}(n)$. Using (\ref{Xm}), it follows that
	\begin{equation}
	\label{f1}
	nK'a(n) <m_1^{\frac{2\alpha}{3}}[H_{\frac{2\alpha}{3}}(m_2-1)-H_{\frac{2\alpha}{3}}(m_1-1)].
	\end{equation}
	Now, if $m_1>1$, attempting to decrease the index $m_1$ by one would result in $$\frac{p_{m_1-1}^{2/3}}{\sum_{j=m_1-1}^{m_2-1}p_j^{2/3}} nK'\geq a^{-1}(n).$$
	Hence, we have
	\begin{equation}
	\label{f2}
	nK'a(n)  \geq(m_1-1)^{\frac{2\alpha}{3}}[H_{\frac{2\alpha}{3}}(m_2-1)-H_{\frac{2\alpha}{3}}(m_1-2)].
	\end{equation}
	Hence, for $m_1>1$, an approximation of $m_1$ can be obtained from:
	\begin{equation}
	\label{se1}
	nK'a(n) \simeq m_1^{\frac{2\alpha}{3}}[H_{\frac{2\alpha}{3}}(m_2-1)-H_{\frac{2\alpha}{3}}(m_1-1)].
	\end{equation} 
	
	Similarly, by the definition of $m_2$, we know $X_{m_2-1}>1$ 
	\begin{equation}
	\label{f3}
	nK'>(m_2-1)^{\frac{2\alpha}{3}}[H_{\frac{2\alpha}{3}}(m_2-1)-H_{\frac{2\alpha}{3}}(m_1-1)].
	\end{equation}
	Now if $m_2 \leq M$, attempting to increase the index $m_2$ by one would lead to  $$\frac{p_{m_2}^{2/3}}{\sum_{j=m_1}^{m_2}p_j^{2/3}} nK' \leq 1.$$ 
	Thus, it follows that
	\begin{equation}
	\label{f4}
	n K' \leq  m_2^{\frac{2\alpha}{3}}[H_{\frac{2\alpha}{3}}(m_2)-H_{\frac{2\alpha}{3}}(m_1-1)].
	\end{equation}
	Therefore, for $m_2 \leq M$, $m_2$ can be computed approximately by:
	\begin{equation}
	\label{se2}
	n K'\simeq (m_2-1)^{\frac{2\alpha}{3}} [H_{\frac{2\alpha}{3}}(m_2-1)-H_{\frac{2\alpha}{3}}(m_1-1)].
	\end{equation}
	
	For $\alpha>3/2$:  Using (\ref{se1}), we have
	\begin{equation}
	na(n)K-{(m_1-1)}\simeq  (m_1-1)^{\frac{2\alpha}{3}}\frac{[-(m_1-1)^{1-\frac{2\alpha}{3}}]}{1-\frac{2\alpha}{3}}.
	\end{equation}
	which leads to 
	\begin{equation}
	m_1 \simeq 1+\frac{2\alpha-3}{2\alpha}na(n)K.
	\end{equation}
	Now if $m_2\leq M$, following (\ref{m1/m2}) we have
	\begin{equation}
	m_2\simeq m_1 (a(n))^{-\frac{3}{2\alpha}}\simeq \frac{2\alpha-3}{2\alpha}nK (a(n))^{1-\frac{3}{2\alpha}}.
	\end{equation}
	
	For $\alpha=3/2$: Assuming $m_2\leq M$, and by using (\ref{se2}) and (\ref{m1/m2}), we have
	\begin{equation}
	m_2-1\simeq \frac{nK-{(m_1-1)}a^{-1}(n)}{\log m_2}.
	\end{equation}
	It follows that
	\begin{equation}
	m_2-1\simeq  \frac{nK}{\log m_2}.
	\end{equation}
	This contradicts $m_2=O(n^{\beta})$, where $\beta<1$. Hence $m_2=M+1$. Assuming $m_1>1$, and using (\ref{se1}), we have
	\begin{equation}
	m_1-1\simeq \frac{nKa(n)-{(m_1-1)}}{\log m_2} .
	\end{equation}
	resulting in $m_1=\Theta(\frac{na(n)}{\log n})$.
	
	For $\alpha<3/2$: Assuming $m_2\leq M$, and by using (\ref{se2}), it follows that
	\begin{equation}
	\label{m2alpha}
	\frac{m_2-1}{1-2\alpha/3}\simeq {nK'}.
	\end{equation}
	Clearly, this contradicts the $m_2\leq M$ assumption. Therefore, $m_2=M+1$. Now using (\ref{se1}) we have
	\begin{IEEEeqnarray}{+rCl+x*}
		(m_1-1)^{\frac{2\alpha}{3}}&\simeq & \frac{nKa(n)- (m_1 - 1) }{[H_{\frac{2\alpha}{3}}(m_2-1)-H_{\frac{2\alpha}{3}}(m_1-1)]}\nonumber\\
		&\simeq & \frac{nKa(n)}{M^{1-2\alpha/3}}.
	\end{IEEEeqnarray}
	leading to $m_1= \Theta \left(\frac{(na(n))^{\frac{3}{2\alpha}}}{M^{\frac{3}{2\alpha} -1}}\right)$.

\subsection{Proof of Lemma \ref{m1m2h}}
\label{app_m1m2h}
Since $\mu<1$, $K'\rightarrow K-\frac{(m_1-1)}{2\log n}$ as $n\rightarrow \infty$. By definition, $m_1$ is the smallest index for which the number of holders is less than $a^{-1}(n)-f(n)$. Using (\ref{Xmh}), it follows that
\begin{equation}
\label{f1h}
2K'\log n<m_1^{\frac{2\alpha}{3}}[H_{\frac{2\alpha}{3}}(m_2-1)-H_{\frac{2\alpha}{3}}(m_1-1)].
\end{equation}
Now, if $m_1>1$, attempting to decrease the index $m_1$ by one would result in $$\frac{p_{m_1-1}^{2/3}}{\sum_{j=m_1-1}^{m_2-1}p_j^{2/3}} nK'-f(n)\geq a^{-1}(n)-f(n).$$
Hence, we have
\begin{equation}
\label{f2h}
2K'\log n \geq(m_1-1)^{\frac{2\alpha}{3}}[H_{\frac{2\alpha}{3}}(m_2-1)-H_{\frac{2\alpha}{3}}(m_1-2)].
\end{equation}

For $m_1>1$, an approximation of $m_1$ can be obtained from:
\begin{equation}
\label{se1h}
2K'\log n \simeq m_1^{\frac{2\alpha}{3}}[H_{\frac{2\alpha}{3}}(m_2-1)-H_{\frac{2\alpha}{3}}(m_1-1)].
\end{equation} 

Similarly, by the definition of $m_2$, we know $X_{m_2-1}>0$. Using (\ref{Xmh}), it follows that
\begin{equation}
\label{f3h}
\frac{nK'}{f(n)}>(m_2-1)^{\frac{2\alpha}{3}}[H_{\frac{2\alpha}{3}}(m_2-1)-H_{\frac{2\alpha}{3}}(m_1-1)].
\end{equation}
If $m_2 \leq M$, attempting to increase the index $m_2$ by one would lead to  $$\frac{p_{m_2}^{2/3}}{\sum_{j=m_1}^{m_2}p_j^{2/3}} nK' -f(n)\leq 0.$$
It follows that
\begin{equation}
\label{f4h}
\frac{nK'}{f(n)} \leq  m_2^{\frac{2\alpha}{3}}[H_{\frac{2\alpha}{3}}(m_2)-H_{\frac{2\alpha}{3}}(m_1-1)].
\end{equation}
Therefore, for $m_2 \leq M$, $m_2$ can be computed approximately by:  
\begin{equation}
\label{se2h}
\frac{nK'}{f(n)} \simeq (m_2-1)^{\frac{2\alpha}{3}} [H_{\frac{2\alpha}{3}}(m_2-1)-H_{\frac{2\alpha}{3}}(m_1-1)].
\end{equation}

For $\alpha>3/2$: By using (\ref{se1h}), we have
\begin{equation}
2\log n(K-\frac{(m_1-1)}{2\log n})\simeq  (m_1-1)^{\frac{2\alpha}{3}}\frac{[-(m_1-1)^{1-\frac{2\alpha}{3}}]}{1-\frac{2\alpha}{3}}.
\end{equation}
which leads to 
\begin{equation}
m_1 \simeq 1+\frac{2\alpha-3}{\alpha}K\log n.
\end{equation}
Now if $m_2\leq M$, following (\ref{m1/m2h}) we have
\begin{equation}
m_2= \Theta\left(\left(\frac{n}{f(n)}\right)^{\frac{3}{2\alpha}}(\log n)^{1-\frac{3}{2\alpha}}\right).
\end{equation}  
For $\alpha=3/2$: using (\ref{se1h}), we have
\begin{equation}
m_1-1\simeq 2\log n\left(K-\frac{(m_1-1)}{2\log n}\right).
\end{equation}
leading to \smash{$m_1-1\simeq\frac{K\log n}{\log m_2}$}. Now, if $m_2=M+1$, then $m_1=\Theta(1)$. Otherwise, if $m_2\leq M$, combining this result with (\ref{m1/m2h}), we have $$m_1=\Theta(1).$$
\begin{equation*}
m_2=\Theta\left(\frac{n}{f(n)\log n}\right).
\end{equation*}

For $\alpha<3/2$: using (\ref{f2h}) we have
\begin{equation}
(m_1-1)^{\frac{2\alpha}{3}}\leq \frac{2\log n(K-\frac{(m_1-1)}{2\log n}) }{m_2^{1-2\alpha/3}}.
\end{equation}
Using straightforward calculations, it follows that
\begin{equation}
\label{m1alpha}
(m_1-1)^{\frac{2\alpha}{3}}\leq \frac{2K\log n}{m_2^{1-2\alpha/3}}.
\end{equation}

If $m_2=M+1$ then  clearly, the RHS converges to zero. Therefore, $m_1 \rightarrow 1$ as $n$ grows. Otherwise, if $m_2\leq M$, by using (\ref{se2h}) we have
\begin{equation}
\frac{m_2-1}{1-2\alpha/3}\simeq \frac{n(K-\frac{(m_1-1)}{2\log n})}{f(n)}=\Theta(\frac{n}{f(n)}).
\end{equation}
By plugging in this result in (\ref{m1alpha}), the RHS converges to zero, as previously. Thus, $m_1 \rightarrow 1$ as $n$ grows.

\bibliographystyle{IEEEtran}
\bibliography{ref}
\end{document}